\newtheorem {definition} {Definition}
\newtheorem {Th} {Theorem}
\newtheorem {Prop} {Proposition}
\newtheorem {Lem} {Lemma}
\newtheorem {Cor} {Corollary}
\newtheorem {req} {Remark}
\newcommand\Z{{\mathbb Z}}
\newcommand\R{{\mathbb R}}
\newcommand\C{{\mathbb C}}
\begin{document}
\title{Dynamical Bounds  for Sturmian Schrödinger Operators}

\author{L. Marin}

\maketitle

\renewcommand{\abstractname}{Abstract}
\begin{abstract}

The Fibonacci Hamiltonian, that is a Schr\"{o}dinger operator associated to a quasiperiodical 
sturmian potential with respect to the golden mean has been
investigated intensively in recent years.
Damanik and Tcheremchantsev developed a method in  \cite{dam1} and find
a non trivial dynamical upper
bound for this model. In this paper, we use this method to  generalize to a large family of Sturmian operators dynamical upper bounds and show  at sufficently large coupling 
anomalous transport for  operators associated to irrational number with a generic diophantine condition. As a counter example, we
exhibit a pathological  irrational number which do not verify this condition and show its associated dynamic exponent only has ballistic bound. Moreover, we establish a global lower
bound for the lower box counting dimension of the spectrum that is used to
obtain a dynamical lower  bound for  bounded density irrational numbers.

\end{abstract}

\section{Introduction}
If $H$ is a self-adjoint operator on a separable Hilbert space $\mathcal{H}$, the time dependent Schr\"{o}dinger equation of quantum mechanics, $i \partial_t \psi = H \psi$, yields to a unitary dynamical evolution in $\mathcal{H}$,
$$
\psi (t) = e^{-itH}\psi (0).
$$

Under the time evolution, $\psi (t)$ will generally spread out with time. This could be a complicated question  to quantify this spreading in concrete cases. One of the most studied case is where $\mathcal{H}$  is given by $L^2(\mathbb{R}^d)$ or $l^2(\mathbb{Z}^d)$, $H$ is a Sch\"{o}dinger operator of the form $-\Delta +V$, and $\psi(0)$ is a localized wavepacket. The form of the potential $V$ is depending on the physical model one studies. One of the most studied is the Sturmian potential and its particular subcase, the Fibonacci Hamiltonian, describing a standard one-dimensional quasicrystal.

The first approach to study quantum dynamics is the spectral theorem. Recall that each initial vector $\psi (0) = \psi$ has a spectral measure, defined as the unique Borel measure verifying
$$
\langle \psi , f(H) \psi \rangle = \int_{\sigma(H)} f(E)d_{\mu_\psi} (E)
$$
for every measurable function $f$. $\langle.,.\rangle$ denotes the scalar product of $\mathcal{H}$. A major step in the theory discovered by Guarneri (\cite{Gua1}, \cite{Gua2}) was that suitable continuity properties of the spectral measure  $d_{\mu_\psi}$ implies lower bounds on the spreading of the wavepacket. It was then extended  by many authors in \cite{com}, \cite{Gua3}, \cite{Last}, \cite{kl}. Continuity properties of the spectral measure follows from upper bounds on measure of intervals, $\mu_\psi ([E-\varepsilon,E+\varepsilon]), E \in \sigma (H), \varepsilon \to 0$. Later on, many authors refined Guarneri's method (\cite{bgt}, \cite{Gua4}, \cite{tche1}) allowing to take into account the whole statistics of $\mu_\psi ([E-\varepsilon,E+\varepsilon]), E \in \mathbb{R}$. One can find better lower bounds with information about both measure of intervals and the growth of the generalized eigenfunctions $u_\psi (n,E) $ (\cite{kl}, \cite{tche1}).

In the case of Schr\"{o}dinger operators in one space dimension, the information on the spectral measure and on generalized eigenfunctions is linked to the properties of  solutions to the difference (also called sometimes free) equation $Hu=Eu$ (\cite{dam4}, \cite{dam2}, \cite{ger1}, \cite{jit1}, \cite{jit2}, \cite{tche4}).
 Explicit lower bounds on spreading rate for numerous concrete cases come from an analysis of these solutions (\cite{dam5}, \cite{dam4}, \cite{ger1}, \cite{jit1}, \cite{jit2}, \cite{kl}).

The second approach to dynamical lower bounds in one dimension is based on the Parseval formula,
$$
2\pi \int_0^\infty e^{-2t/T} |\langle e^{-itH} \delta_1,\delta_n \rangle |^2 dt = \int_{-\infty}^\infty |\langle (H -E -\frac{i}{T})^{-1} \delta_1,\delta_n \rangle |^2 dE.
$$
This method developed in \cite{dam6}, \cite{dam7} and \cite{tche4} is the basis for the results in \cite{dam8} and \cite{jit3}. This method has the advantage that it gives directly dynamical bounds without any knowledge of the properties of spectral measure.  What is required is  upper bounds for solutions corresponding to some set of energies, which can be very small (non empty is sufficient). Moreover, additional information allows to  improve the results. A combination of both approach leads to optimal dynamical bounds for growing sparse potentials (see \cite{tche4}).

As mentioned before, there is a fairly good  understanding of how to prove dynamical lower bounds, specially in one space dimension. Results of dynamical upper bounds are a few and more recent. Proving upper bounds is hard because one needs to control the entire wavepacket. In fact, the dynamical lower bounds that typically established only bound some (fast) part of the wavepacket from below and this is sufficient for the desired growth of the standard dynamical quantities. In the same way, it is of course much easier to prove upper bounds only for a (slow) portion of the wavepacket. Killip, Kiselev and Last developed this idea with success in \cite{kkl}. Their work provides explicit criteria for upper bounds on the slow part of the wavepacket in terms of lower bounds on solutions.  Applying their general method to the Fibonacci operator, their result supports the conjecture that this model exhibits anomalous transport (i.e. neither localized, nor diffusive, nor ballistic).

The conjecture for Fibonacci model is finally proved at sufficiently large coupling by Damanik and Tcheremchantsev in \cite{dam1}. They developed a general method establishing a connection between solutions properties and dynamical upper bounds.
Based on the parseval formula, this method allows to bound the entire wavepacket from above provided that suitable lower bounds for solution (or rather transfer matrix) growth at complex energies are available.

It is the main purpose of this paper to extend the application of this general method used for concrete Fibonacci model  to almost every Sturmian potentials. We will show that one has anomalous transport for Sturmian model associated to irrational number far enough from rational numbers, in a sense we develop further. On the contrary, we construct an irrational number close enough to rational number that yields to balistic motion.

In this paper, we use tools that are relevant to give a new lower  bound for the box dimension of the spectrum that is better for almost every irrational numbers. Since the spectrum is a Cantor set with Lebesgue measure zero, it is logical to investigate its fractal dimension.
It is well known that this  Cantor set is the limit of band spectra of approximant operators \cite{s1,bel}.
To find the bound, we use band spectra at rank $n$ as a sequence of $\varepsilon_n$-cover of the spectrum. Using the informations given in \cite{r1} about the number of band in periodic band spectra and in \cite{chin1} about the length  of the bands, we estimate $\varepsilon_n$ and give a bound for the number of band of this diameter . This yields to a bound from below of the minimal number of balls of diameter $\varepsilon_n$ one needs to cover the spectrum.
This bound also has  a direct  dynamical application and allows us to state a dynamical lower bound using the method in (\cite{tche1}). It is required for this lower bound to have the transfer matrix  norms polynomially bounded. This property is shown to be true for bounded density irrational number in \cite{ioc1}, hence more is not expected.
This limits dynamical implication of this lower bound to a set of irrational number of Lebesgue  measure $0$.

We will give precise statements of the model we study and our results in the next section. Section 3 will be devoted to the proof of our main result. We give a pathological example in the section 4 and a new lower bound for box counting dimension of the spectrum in section 5.

\section{Model and Statements}

 We limit our study to  the one dimensional  discrete Schrödinger operator $H_\beta$,
\begin{equation}\label{defH}
[H_\beta\psi](n) = \psi (n+1) + \psi (n-1) + V(n) \psi (n)
\end{equation}
acting on  $l^2(\Z)$, associated to a sturmian potential $V(n)$ given by
$$
V(n) = (\lfloor (n+1)\beta \rfloor - \lfloor n\beta \rfloor )V
$$
with  $\beta$ an irrational number in  $[0,1]$ and  $V$ a positive  constant.
We denote continued fraction expansion of $\beta$ by
$$
\beta = \frac{1}{a_1 + \frac{1}{a_2 + \dots}} = [0,a_1,a_2,\dots]
$$

The Fibonacci Hamiltonian, $H_\beta$ with $\beta=\frac{\sqrt{5}-1}{2} = [0,1,1,\dots]$ is the more simple example in Sturmian model because of its particular continued fraction development.

Since we are interested  by dynamical bounds, let us recall some quantities we want to bound:

We denote the time average outside probabilities
$$
P(N,T) = \sum_{|n| > N} a(n,T),
$$
with
$$
a(n,T) = \frac{2}{T} \int^\infty_0 e^{-2t/T} |\langle e^{-itH}\delta_1,
\delta_n \rangle |^2 dt.
$$
For all
$\alpha \in [0,+\infty ] $, see \cite{ger1}
$$
S^-(\alpha) = -\liminf_{T \to \infty } \frac{\log P(T^\alpha -2 ,T)
}{\log T}
$$
 and
$$
S^+(\alpha) = -\limsup_{T \to \infty } \frac{\log P(T^\alpha -2 ,T)
}{\log T}
$$

The following critical exponents are particular of interest:
$$
\alpha^{\pm}_l = \sup \{\alpha \geq 0: S^\pm (\alpha) =0\},
$$
$$
\alpha^{\pm}_u = \sup \{\alpha \geq 0: S^\pm (\alpha) < \infty \}.
$$

They verify $0\leq \alpha_l^\pm \leq \alpha_u^\pm$. In particular, if $\gamma
> \alpha_u^+$ then $P(T^\gamma,T)$ goes to $0$ fast.
$\alpha^{\pm}_l$ can be interpreted as the (lower and upper) rates of
propagation of the essential part of the wavepacket and $\alpha^{\pm}_u$
as  the rates of propagation of the fastest part of the wavepacket.

Moreover, we  always have for this kind of models $\alpha_u^+ \leq 1$. This upper bound, called ballistic, is
the fastest rate of spreading of the wavepacket.

Sturmian potentials (quasiperiodic structure) are the
buffer situation between random potentials (no structure in potential) that imply dynamical localization ($\alpha_u^\pm =0$) and periodic potentials that imply  ballistic spreading that is  $\alpha_u^\pm =1$.

More precisely, one has a non trivial strictly positive bound for almost all irrational numbers. In a sense we will make more precise latter, these irrational numbers are far enough from rational numbers. On the other hand, we show for  irrational number close enough to rational number, one has ballistic motion.

The first objective of this paper is to give a non ballistic upper bound for a large set of irrational numbers.

Recall the sequences associated to $\beta$:
$$
p_{-1} = 1, p_0 =0,
$$
$$
q_{-1} = 0,  q_0 = 1,
$$
$$
p_{k+1} = a_{k+1} p_k + p_{k-1}
$$
$$
q_{k+1} = a_{k+1} q_k + q_{k-1}
$$

 We can  now state our main result:

\begin{Th}\label{thprincipal}
Let $\beta$ be an irrational number and $H_\beta$ define as in  (\ref{defH})
with a sturmian potential associated to $\beta$. Assume that $V>20$.
 If $D
= \limsup_k \frac{\log q_k}{ k}$ is finite then
$$
\alpha_u^+ \leq \frac{2D}{\log \left( \frac{V-8}{3} \right) } .
$$

Moreover, for an irrational number  with  continued fraction expansion containing no $1$,
the dynamical upper bound becomes
$$
 \alpha_u^+ \leq \frac{D}{\log \left( \frac{V-8}{3} \right) }  .
$$
\end{Th}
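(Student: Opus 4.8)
The plan is to apply the method of Damanik and Tcheremchantsev \cite{dam1}, which reduces a dynamical upper bound of the form $\alpha_u^+ \leq \rho$ to a quantitative lower bound on the growth of transfer matrices at the complex energies $z = E + \frac{i}{T}$ that appear in the Parseval formula of Section 2. Via the resolvent representation the outside probability $P(N,T)$ is controlled, up to normalisation, by $\frac{1}{T}\int \sum_{|n|>N} \|T(n,z)\|^{-2}\,dE$, so exhibiting fast enough growth of $\|T(n,z)\|$ for $n$ beyond the scale $N$ forces $P(N,T)$ to decay. Since $z$ has nonzero imaginary part it lies in the resolvent set, hence the cocycle is uniformly hyperbolic for every real $E$; the real work, and the only genuinely Sturmian input, is to make this growth explicit in terms of the continued fraction data of $\beta$.

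To that end I would introduce the periodic--approximant transfer matrices $M_k(z)$ attached to the word of length $q_k$, which obey the renormalisation $M_{k+1} = M_{k-1} M_k^{a_{k+1}}$ and lie in $SL(2,\C)$ because each single--site factor has unit determinant. Writing $x_k = \tfrac12 \operatorname{tr} M_k(z)$ and using the Cayley--Hamilton identity $M_k^{a} = U_{a-1}(x_k) M_k - U_{a-2}(x_k) I$ for the Chebyshev polynomials $U_m$, this recursion induces the Sturmian trace map on the $x_k$. The heart of the argument is an induction on $k$ showing that, once $V > 20$ so that $\frac{V-8}{3} > 4 > 1$, the initial traces are large and the map propagates this growth; for $k$ up to the scale set by the time $T$ one obtains
$$
\|M_k(z)\| \ \geq\ \left(\frac{V-8}{3}\right)^{k/2}
$$
uniformly over the relevant energies. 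Both the constant $\frac{V-8}{3}$ and the threshold $V>20$ emerge from the base step, where one bounds $\operatorname{tr} M_k$ from below on the spectrum, which is contained in a bounded interval depending on $V$.

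With the growth estimate in hand the exponent is a counting argument. By definition $D = \limsup_k \frac{\log q_k}{k}$, so $q_k \approx e^{Dk}$; choosing the scale $k$ with $q_k$ comparable to $N = T^\alpha$ gives $k \approx \frac{\alpha \log T}{D}$. The Damanik--Tcheremchantsev machinery requires the transfer--matrix norm at this scale to beat a fixed power of $T$; inserting $\|M_k(z)\| \geq (\frac{V-8}{3})^{k/2}$ and comparing exponents yields $P(T^\alpha, T) \to 0$ as soon as $\alpha > \frac{2D}{\log(\frac{V-8}{3})}$, which is exactly the claimed bound on $\alpha_u^+$; specialising to $a_k\equiv 1$, where $D=\log\phi$, recovers the Fibonacci estimate of \cite{dam1}. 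For a $\beta$ whose continued fraction contains no $1$, every renormalisation step raises $M_k$ to a power $a_{k+1}\geq 2$, so a single step already contributes a full factor $\frac{V-8}{3}$ rather than half of one; the growth estimate improves to $\|M_k(z)\| \geq (\frac{V-8}{3})^{k}$, and the identical counting halves the bound to $\frac{D}{\log(\frac{V-8}{3})}$.

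I expect the main obstacle to be the uniform trace--map lower bound at complex energy for a general, non--constant sequence $(a_k)$. Two points are delicate: controlling at which index $k$ the growth actually switches on as a function of $T$, since the imaginary part $\frac{i}{T}$ must be shown to push every real $E$ into the escaping regime by the relevant scale and only there does the estimate hold; and handling the case $a_k = 1$, where a single trace--map step can fail to grow, so that one must combine two consecutive steps to recover a factor $\frac{V-8}{3}$ --- this is precisely what costs the factor $2$ in the general estimate. The remaining technical ingredient is to make the counting rigorous, i.e. to control the measure of the set of energies $E$ on which the lower bound degrades.
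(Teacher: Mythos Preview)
Your framework is right---Damanik--Tcheremchantsev (Theorem~\ref{thdam}), the renormalisation $M_{k+1}=M_{k-1}M_k^{a_{k+1}}$, and the trace map---but the mechanism you describe for producing the constant $\frac{V-8}{3}$ and the factor $2$ is not the one that actually works, and the gap is exactly the ``obstacle'' you flag at the end without a plan to resolve.

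The sentence ``the initial traces are large and the map propagates this growth'' is false for the energies that matter. For real $E$ in the spectrum the traces satisfy $|x_k(E)|\le 2$ for \emph{all} $k$; adding $\frac{i}{T}$ perturbs this only slightly for small $k$. There is no uniform bound $\|M_k(z)\|\ge\bigl(\frac{V-8}{3}\bigr)^{k/2}$ starting from $k=0$. What must be shown is that the imaginary part $1/T$ forces escape \emph{by a specific index} $k(T)$, after which the S\"ut\H{o}--Bellissard criterion (Lemma~\ref{expl}) gives super-exponential growth $|x_j|\ge \exp\bigl(c\,G_{j-k(T)}\bigr)$---not geometric growth with rate $\frac{V-8}{3}$. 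The quantity $\frac{V-8}{3}$ enters only through the \emph{escape index}, not through the subsequent growth rate.

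The paper controls $k(T)$ geometrically: the Liu--Wen band-length estimates (Theorem~\ref{thchinois}) bound the widths of the real bands of $\sigma_{k,0}$ by products of entries of the matrices $P_n$, whose largest entry is $c_1=\frac{3}{V-8}$; Raymond's band combinatorics (Lemma~\ref{raymond}) forbid two consecutive type~I$\to$II transitions, so the trivial coefficient $c_1^{a_n-1}=1$ (which occurs only when $a_n=1$) can appear in at most half the factors, yielding $R_k\le c_1^{k/2}$. A Koebe distortion argument then shows the complex set $\sigma_{k,0}^\delta=\{|x_k|\le 2+\delta\}$ has height at most $R_k$ as well. Hence $\mathrm{Im}\,z=1/T>R_k$ forces $|x_k(z)|>2+\delta$, and similarly for $z_k$, which triggers Lemma~\ref{expl}. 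Solving $R_{k}\approx c_1^{k/2}\approx q_k^{-\gamma(V)}$ with $q_k\approx e^{Dk}$ gives $\gamma(V)=\frac{\log((V-8)/3)}{2D}$, and one takes $N(T)=q_{k(T)+\lfloor\sqrt{k(T)}\rfloor}\lesssim T^{1/\gamma(V)+\nu}$; the extra $\sqrt{k(T)}$ levels make the super-exponential growth overwhelm the $T^3$ in Theorem~\ref{thdam} \emph{uniformly} in $E$, so no exceptional-set measure argument is needed. When no $a_k$ equals $1$ every entry of $P_n$ is at most $c_1$, the combinatorial halving disappears, $R_k\le c_1^{k}$, and the bound improves by a factor of $2$.

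In short: the constant and the factor $2$ are band-geometry facts (Liu--Wen plus Raymond plus Koebe), not trace-map induction facts, and without these ingredients you have no way to pin down $k(T)$.
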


\begin{req}
It is clear that taking $V$ large enough, one can obtain a non trivial bound that is smaller than $1$.
\end{req}

 It is well known that the set of irrational numbers satisfying $D$ finite is Lebesgue measure 1. For any algebraic number, that is with a periodic continued fraction development, one can easily compute $D$. Moreover, explicit valuation  for  $D$ is more general and we recall now a Khintchin interesting probability result:

\begin{Lem}(Khintchin)
For any $\beta \in [0,1]$, define the sequence
$$
q_{k+1} = a_{k+1}q_k + q_{k-1}, q_0=1, q_{-1}=0.
$$
For almost all $\beta$ with respect to Lebesgue measure,
 $$
D=\limsup_k\frac{\log q_k}{k}=D_K = \frac{\pi^2}{12 \log 2}.
$$
and
$$
M= \liminf_k (a_1 \dots a_k)^\frac{1}{k} = C_K =2.685...
$$
$C_K$ is  called the Khintchin constant.
\end{Lem}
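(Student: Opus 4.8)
The plan is to recognize both $D$ and $M$ as Birkhoff ergodic averages along the orbit of $\beta$ under the Gauss map $T(x) = \{1/x\}$ on $[0,1]$, and to exploit the classical fact that $T$ is ergodic with respect to the Gauss measure $d\mu = \frac{1}{\log 2}\frac{dx}{1+x}$. Since $a_k(\beta) = a_1(T^{k-1}\beta)$ with $a_1(x) = \lfloor 1/x\rfloor$, both quantities have the form $\frac1k\sum_{j=0}^{k-1} f(T^j\beta)$; Birkhoff's theorem will in fact give that the \emph{full} limits exist for a.e.\ $\beta$, so the stated $\liminf$ and $\limsup$ coincide with these limits and equal the corresponding space averages $\int f\,d\mu$. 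I would cite the ergodicity of $(T,\mu)$ as a standard result rather than reprove it.

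For the Khintchin constant, write $\frac1k\log(a_1\cdots a_k) = \frac1k\sum_{j=0}^{k-1}\log a_1(T^j\beta)$. First I would check that $f=\log a_1 \in L^1(\mu)$: since $\{a_1=r\} = (\frac{1}{r+1},\frac1r]$ has $\mu$-measure comparable to $1/r^2$, the series $\sum_r \frac{\log r}{r^2}$ converges, giving integrability. Birkhoff then yields $M = \exp\int_0^1 \log a_1\,d\mu$ for a.e.\ $\beta$, and splitting the integral over the level sets of $a_1$,
$$
\int_0^1 \log\lfloor 1/x\rfloor\, d\mu(x) = \frac{1}{\log 2}\sum_{r=1}^\infty \log r\,\log\!\left(1+\frac{1}{r(r+2)}\right),
$$
which is exactly $\log C_K$ in the product form $C_K = \prod_{r\ge 1}(1+\frac{1}{r(r+2)})^{\log_2 r}$.

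For the L\'evy constant $D=D_K$, the key ingredient is the continued–fraction identity $\prod_{j=0}^{k-1} T^j\beta = |q_{k-1}\beta - p_{k-1}|$, proved by induction from $T^j\beta = 1/(a_{j+1}+T^{j+1}\beta)$ together with the recurrences defining $p_k,q_k$. Taking logarithms and applying Birkhoff to $g(x)=\log x$ (integrable since $\int_0^1 \frac{|\log x|}{1+x}\,dx<\infty$) gives $\frac1k\log|q_{k-1}\beta-p_{k-1}| \to \int_0^1 \log x\,d\mu$ a.e. Expanding $\frac{1}{1+x}$ as a geometric series, $\int_0^1\frac{\log x}{1+x}\,dx = -\sum_{m\ge 1}\frac{(-1)^{m-1}}{m^2} = -\frac{\pi^2}{12}$, so the limit is $-\frac{\pi^2}{12\log 2}$. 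Finally the standard approximation bounds $\frac{1}{2q_k}\le |q_{k-1}\beta-p_{k-1}| \le \frac{1}{q_k}$ give $\log|q_{k-1}\beta-p_{k-1}| = -\log q_k + O(1)$, whence $\frac1k\log q_k \to \frac{\pi^2}{12\log 2}=D_K$; existence of the limit forces $\limsup_k \frac{\log q_k}{k} = D_K$.

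The routine parts are the two integral evaluations and the elementary continued–fraction estimates. The genuine content is the ergodicity of the Gauss map, which underlies both statements; I would regard that—together with verifying the $L^1$ integrability of the unbounded integrands $\log a_1$ and $\log x$—as the main obstacle, and for the present purposes simply invoke the known ergodic-theoretic literature.
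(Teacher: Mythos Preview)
Your proof sketch is correct and is precisely the standard ergodic-theoretic derivation of the L\'evy and Khinchin constants via Birkhoff's theorem for the Gauss map. Note, however, that the paper does not actually prove this lemma: it is stated as a classical result attributed to Khinchin and used as a black box to deduce the corollary that follows. So there is no ``paper's own proof'' to compare against; your argument simply supplies the omitted justification, and it does so along the expected lines (ergodicity of $T$ with respect to the Gauss measure, $L^1$-integrability of $\log a_1$ and $\log x$, the identity $\prod_{j=0}^{k-1}T^j\beta=|q_{k-1}\beta-p_{k-1}|$, and the evaluation $\int_0^1\frac{\log x}{1+x}\,dx=-\pi^2/12$).
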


\begin{Cor}
For Lebesgue almost every irrational numbers $\beta$, we have
$$
\alpha_u^+ \leq \frac{2D_K}{\log \left( \frac{V-8}{3} \right) }.
$$
\end{Cor}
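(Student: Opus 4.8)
The plan is to derive the Corollary directly by combining Theorem~\ref{thprincipal} with the Khintchin Lemma stated just above it; no new analysis is required, so I would keep the argument to a few lines. The entire content lies in those two prior results, and the Corollary is obtained by specializing the parameter $D$.

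First I would invoke the Khintchin Lemma: for Lebesgue almost every $\beta \in [0,1]$, the quantity $D = \limsup_k \frac{\log q_k}{k}$ equals the finite constant $D_K = \frac{\pi^2}{12\log 2}$. In particular $D < \infty$ on a set of full Lebesgue measure, so the finiteness hypothesis of Theorem~\ref{thprincipal} is automatically satisfied for almost every $\beta$ (the standing assumption $V > 20$ being inherited from the ambient hypotheses). I would then apply the general first bound of Theorem~\ref{thprincipal}, namely $\alpha_u^+ \leq \frac{2D}{\log\left(\frac{V-8}{3}\right)}$, and substitute $D = D_K$ to read off the asserted inequality.

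One point deserves an explicit remark: I would use the general (factor-$2$) bound rather than the sharper estimate $\alpha_u^+ \leq \frac{D}{\log\left(\frac{V-8}{3}\right)}$ reserved for expansions containing no $1$, because the set of $\beta$ whose continued fraction omits the digit $1$ has Lebesgue measure zero. Indeed, by the Gauss--Kuzmin distribution the partial quotient $1$ occurs with positive asymptotic frequency (probability $\log_2(4/3)$), so for almost every $\beta$ the expansion contains infinitely many $1$'s; hence on the full-measure set where $D = D_K$ only the factor-$2$ version is available, which is precisely the form appearing in the Corollary.

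There is essentially no obstacle here, since all the substance is carried by Theorem~\ref{thprincipal} and by the classical metric theorem of Khintchin. The only care needed is to verify that both of those results genuinely apply on a common full-measure set, namely the intersection of the set on which $D = D_K$ with the (full-measure, hence also conull) complement of the ``no $1$'' exceptional set, and to confirm that the resulting bound is the general one.
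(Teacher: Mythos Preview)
Your proof is correct and follows exactly the paper's approach: the paper's proof is the single sentence ``It follows straightfully from previous theorem \ref{thprincipal} and Khintchin lemma.'' Your additional remark about the Gauss--Kuzmin distribution and the unavailability of the sharper factor-$1$ bound is accurate and helpful, but not required for the argument.
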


\begin{proof}
It follows straightfully from previous theorem \ref{thprincipal} and Khintchin lemma.
\end{proof}

\begin{Cor}
For a precious number, that is $\omega$ = $[0,a,a,a,a,...]$, $a \neq 1$ the bound becomes
$$
 \alpha_u^+ \leq  \frac{\log(a + \omega)}{\log \left( \frac{V-8}{3} \right)}  .
$$
\end{Cor}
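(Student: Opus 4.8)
The plan is to apply the improved (second) estimate of Theorem~\ref{thprincipal} and then to compute $D$ explicitly for the precious number. First I observe that since $a \neq 1$, every partial quotient of $\omega = [0,a,a,a,\dots]$ equals $a \geq 2$, so the continued fraction expansion of $\omega$ contains no $1$. Hence the hypothesis of the second part of Theorem~\ref{thprincipal} is met and
$$
\alpha_u^+ \leq \frac{D}{\log\left(\frac{V-8}{3}\right)},
$$
so everything reduces to showing that $D = \log(a+\omega)$.

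To evaluate $D = \limsup_k \frac{\log q_k}{k}$, I use that with all $a_k = a$ the denominators satisfy the constant-coefficient linear recurrence $q_{k+1} = a\,q_k + q_{k-1}$, whose characteristic polynomial is $x^2 - a x - 1$. Its roots are $\lambda = \frac{a + \sqrt{a^2+4}}{2}$ and $\lambda' = \frac{a - \sqrt{a^2+4}}{2}$, with $\lambda\lambda' = -1$, so that $\lambda > 1 > |\lambda'|$. Writing $q_k = A\lambda^k + B(\lambda')^k$ and imposing $q_0 = 1$, $q_{-1}=0$ gives $A = \frac{\lambda}{\lambda-\lambda'} > 0$, so the subdominant term $B(\lambda')^k$ tends to $0$ and $q_k \sim A\lambda^k$. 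Consequently $\frac{\log q_k}{k} \to \log\lambda$; in particular the $\limsup$ is attained as a genuine limit and $D = \log\lambda$.

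Finally I would identify $\lambda$ with $a+\omega$. The self-similarity $\omega = [0,a,a,\dots] = \frac{1}{a+\omega}$ yields $\omega^2 + a\omega - 1 = 0$, whose root in $[0,1]$ is $\omega = \frac{-a+\sqrt{a^2+4}}{2}$; adding $a$ gives $a+\omega = \frac{a+\sqrt{a^2+4}}{2} = \lambda$, so $D = \log(a+\omega)$ and the stated bound follows. There is no real obstacle here: the argument is a computation with a second-order linear recurrence, and the only delicate point is checking that the coefficient $A$ of the dominant root does not vanish, which is what guarantees that $\limsup_k \frac{\log q_k}{k}$ equals $\log\lambda$.
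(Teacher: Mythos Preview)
Your proof is correct and follows exactly the approach the paper intends: the paper's own proof is the single line ``One can compute $q_k$ easily for such numbers,'' and you have carried out precisely that computation, solving the recurrence $q_{k+1}=aq_k+q_{k-1}$ and identifying the dominant root $\lambda$ with $a+\omega$. Your care in checking that the coefficient $A$ of $\lambda^k$ is nonzero is the only point that actually needs verification, and you have handled it.
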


\begin{proof}
One can compute $q_k$ easily for such numbers.
\end{proof}

On the contrary, if $D$ is infinite, one can have ballistic motion at all large coupling: 

\begin{Th}
There exist an irrational number  $\omega$ with $D=+\infty$ such that
for any $V>20$ the dynamic  of $H_\omega$ is ballistic
\end{Th}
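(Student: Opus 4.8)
The idea is that a number $\omega$ whose partial quotients grow extremely fast produces a Sturmian potential that is \emph{periodic on enormously long windows}, so that on the corresponding time scales the dynamics cannot be distinguished from that of a periodic operator, which is ballistic. Since the universal bound $\alpha_u^+\leq 1$ always holds, it suffices to prove the matching lower bound $\alpha_u^+\geq 1$, that is, $S^+(\alpha)=0$ for every $\alpha<1$. I would construct $\omega=[0,a_1,a_2,\dots]$ recursively: having fixed $a_1,\dots,a_k$ (hence the canonical Sturmian block $s_k$ of length $q_k$, where $s_{k+1}=s_k^{a_{k+1}}s_{k-1}$, and the $q_k$-periodic operator $H_k^{per}$ obtained by repeating $s_k$), I would choose $a_{k+1}$ \emph{after} $H_k^{per}$ is known, and so large that both $\frac{\log q_{k+1}}{k+1}\geq k$ and $\log q_{k+1}\geq k\,|\log\eta_k|$, where $\eta_k>0$ is the ballistic velocity of $H_k^{per}$ introduced below. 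The first inequality forces $D=\limsup_k\frac{\log q_k}{k}=+\infty$; the second makes the construction flexible, as explained at the end.

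The structural step exploits the hierarchy $s_{k+1}=s_k^{a_{k+1}}s_{k-1}$, which shows that the potential contains $a_{k+1}$ consecutive copies of $s_k$, so it coincides with the $q_k$-periodic potential of $H_k^{per}$ on an interval $[-L_k,L_k]$ with $L_k\geq c\,q_{k+1}$ surrounding the initial site $1$ (this requires locating $1$ inside an $s_k^{a_{k+1}}$ run, which I would read off from the two-sided Sturmian sequence about the origin, after at most a bounded translation). Since $\|H_\omega\|,\|H_k^{per}\|\leq 2+V$, a finite propagation speed estimate of Combes--Thomas / Lieb--Robinson type gives that $\|(e^{-itH_\omega}-e^{-itH_k^{per}})\delta_1\|$ is exponentially small for $|t|\leq L_k/(2(2+V))$. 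Choosing the time scale $T_k\sim L_k/\log L_k$ kills the tail of the weight $e^{-2t/T}$ beyond the comparison window, so that $P(N,T_k)$ for $H_\omega$ and its analogue $P_k^{per}(N,T_k)$ for $H_k^{per}$ differ by a negligible amount.

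For the periodic operator I would invoke ballistic transport: in one dimension the discriminant $\Delta(E)=\mathrm{tr}\,M_{q_k}(E)$ is a non-constant, monic polynomial of degree $q_k$, so each band function is strictly monotone in the band interior and the group velocity is nonzero almost everywhere. Hence there are $\eta_k>0$ and $c_0>0$ with $P_k^{per}(\eta_k T,T)\geq c_0$ for all large $T$. Combining this with monotonicity of $P$ in $N$ and the comparison above, $P(T_k^\alpha-2,T_k)\geq c_0/2$ as soon as $T_k^\alpha<\eta_k T_k$; by the choice $\log q_{k+1}\geq k\,|\log\eta_k|$ this holds for every fixed $\alpha<1$ and all large $k$. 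Therefore $\frac{\log P(T_k^\alpha-2,T_k)}{\log T_k}\to 0$ along this subsequence, and since $P\leq 1$ makes this ratio nonpositive for every $T$, we obtain $\limsup_T\frac{\log P(T^\alpha-2,T)}{\log T}=0$, i.e. $S^+(\alpha)=0$ for all $\alpha<1$. Thus $\alpha_u^+\geq 1$, and together with $\alpha_u^+\leq 1$ the motion of $H_\omega$ is ballistic, for every $V>20$.

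The main obstacle is the transfer step: rigorously pushing the ballistic lower bound from $H_k^{per}$ to $H_\omega$ through the finite propagation speed estimate while controlling the $e^{-2t/T}$-weighted average (its tail beyond the light cone) and, above all, guaranteeing that the long periodic block genuinely surrounds the initial site $1$. By contrast, the non-degeneracy of the velocity is \emph{not} a real difficulty: $\eta_k>0$ holds for every $k$ because a one-dimensional periodic Schr\"odinger operator never has flat bands, and since $a_{k+1}$ is selected only after $\eta_k$ is known, we may always take $q_{k+1}$ large enough to overwhelm any decay of $\eta_k$.
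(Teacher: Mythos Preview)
Your proposal is correct and follows the same overall strategy as the paper: construct $\omega=[0,a_1,a_2,\dots]$ inductively, choosing each $a_{k+1}$ so large that the Sturmian potential agrees with the $q_k$-periodic approximant on a window long enough to inherit ballistic behaviour at a suitable time scale $T_k$. The paper's proof is essentially a compressed version of what you wrote.

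The differences are in packaging. The paper does not redo the locality estimate via Lieb--Robinson/Combes--Thomas; it simply quotes a lemma of Last (your ``main obstacle'' is exactly what that lemma absorbs): if $V_1$ and $V_2$ agree on $\{|k|<L(T,\varepsilon)\}$ then $|\langle|X|^2_{H_1}\rangle_T-\langle|X|^2_{H_2}\rangle_T|<\varepsilon$. Consequently the paper works with the second moment $\langle|X|^2\rangle_T$ rather than with the outside probabilities $P(N,T)$, obtains $\langle|X|^2_{H_\omega}\rangle_{T_n}>T_n^2/\log T_n-1$ along the constructed sequence, and concludes via $\alpha_u^+\geq\beta^-_{\delta_1}(2)\geq 1-\varepsilon$. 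Your route through $S^+(\alpha)$ is equally valid and arguably more transparent with respect to the definition of $\alpha_u^+$; the paper's route is shorter because the comparison lemma is already available as a black box. The coincidence of the two potentials near the origin, which you flag as delicate, is handled in the paper by the elementary word identity $W_n^\infty=W_n^{a_{n+1}}W_{n-1}^{a_n}W_{n-2}W_n^\infty$, showing that the first $q_{n+1}$ letters of the Sturmian word and of the $q_n$-periodic word agree; the two-sided issue is not elaborated there either.
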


We also prove a new lower bound for the fractal dimension of the spectrum:

\begin{Th}
Denote  $C_k= \frac{3}{k} \sum_{j=1}^k \log (a_j +2)$. We have for
any irrational number  $\beta$ verifying
$C = \limsup C_k <+\infty$ and $V>20$:
\begin{equation}
dim_B^+ (\sigma) \geq \frac{1}{2} \frac{\log 2}{C + \log (V+5)}
\end{equation}
where $\sigma$ is the spectrum of $H_\beta$.
\end{Th}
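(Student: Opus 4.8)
The plan is to realize $\sigma$ as the limit of the band spectra of its periodic approximants and to extract from these approximants an efficient lower bound on the covering numbers entering the definition of the upper box dimension. Recall that if $N(\varepsilon)$ denotes the least number of intervals of length $\varepsilon$ needed to cover $\sigma$, then $dim_B^+(\sigma) = \limsup_{\varepsilon \to 0} \frac{\log N(\varepsilon)}{-\log \varepsilon}$, so it suffices to produce a single sequence $\varepsilon_k \downarrow 0$ along which $N(\varepsilon_k)$ is large. For each $k$ let $H_k$ be the $q_k$-periodic operator obtained from the rational truncation $p_k/q_k$ of $\beta$, and let $\sigma_k$ be its spectrum, a finite union of $b_k$ compact bands. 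By the standard approximation results for Sturmian operators (\cite{s1,bel}), $\sigma_k \to \sigma$ in the Hausdorff metric and, more precisely, every band of $\sigma_k$ contains points of $\sigma$; this is what turns the bands into genuine, separated pieces of the spectrum rather than mere covering sets.

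First I would record the two quantitative inputs. From the constructive band counting of \cite{r1} I would extract a lower bound on the number of bands of the form $b_k \geq 2^{k/2}$, valid for every irrational $\beta$ (the hierarchical band splitting forces the count to at least double every two renormalization steps, which is where the universal constant $\tfrac12\log 2$ originates). From the length estimates of \cite{chin1}, together with the large-coupling hypothesis $V>20$ which makes the relevant transfer matrices over one period uniformly expanding, I would obtain a scale $\varepsilon_k$ at which the $b_k$ bands live and are mutually separated, with $-\log \varepsilon_k \leq k\,(C_k + \log(V+5)) + o(k)$, i.e. $\varepsilon_k \gtrsim (V+5)^{-k}\prod_{j=1}^k (a_j+2)^{-3}$. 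Here $V>20$ is exactly what guarantees that both the bands and the gaps separating them are governed by the inverse of the transfer-matrix norm, and hence do not collapse faster than this geometric rate.

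The counting step is then elementary: choosing one point of $\sigma$ in each of the $b_k$ bands produces $b_k$ points of $\sigma$ pairwise separated by more than $\varepsilon_k$ (distinct bands being at distance $>\varepsilon_k$ at this scale), and any interval of length $\varepsilon_k$ contains at most one of them, whence $N(\varepsilon_k) \geq b_k$. Feeding the two estimates into the definition gives, after absorbing the $o(k)$ corrections into the limit,
$$
dim_B^+(\sigma) \;\geq\; \limsup_k \frac{\log b_k}{-\log \varepsilon_k} \;\geq\; \limsup_k \frac{\tfrac{k}{2}\log 2}{k\,(C_k + \log(V+5))} \;=\; \limsup_k \frac{\tfrac12\log 2}{C_k + \log(V+5)}.
$$
Since $C = \limsup_k C_k$, for every $\delta>0$ one has $C_k \leq C + \delta$ for all large $k$, so each term of the last sequence is eventually at least $\tfrac12\log 2 /(C + \delta + \log(V+5))$; letting $\delta \to 0$ yields the claimed bound $dim_B^+(\sigma) \geq \tfrac12\log 2 /(C + \log(V+5))$.

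The main obstacle is the second quantitative input rather than the combinatorics: one needs the band/gap scale to be bounded below (an upper bound on $-\log\varepsilon_k$), which amounts to showing that the spectral bands do not become too thin and, crucially, that consecutive bands stay separated by at least $\varepsilon_k$. Lower bounds on band lengths and inter-band gaps are genuinely more delicate than upper bounds, and this is where the precise estimates of \cite{chin1} and the hyperbolicity forced by $V>20$ must be combined; keeping the dependence on the partial quotients $a_j$ sharp enough to produce the factor $\prod(a_j+2)^{-3}$, and hence the exact constant $C$, is the critical point. A secondary matter to check is that the approximation $\sigma_k \to \sigma$ is accurate enough at scale $\varepsilon_k$ for each counted band to really meet $\sigma$, so that the lower bound on $N(\varepsilon_k)$ is legitimate.
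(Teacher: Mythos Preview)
Your overall architecture is right and matches the paper: use the periodic approximants, find many bands that are both long enough and well separated at some scale $\varepsilon_k$, and feed this into the definition of $\dim_B^+$. But there is a genuine gap in how you combine the two ``quantitative inputs''.

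You treat the band count $b_k \geq 2^{k/2}$ and the length scale $\varepsilon_k \gtrsim (V+5)^{-k}\prod_j (a_j+2)^{-3}$ as independent facts that apply to the same set of bands. They are not. The Liu--Wen lower bound $|B| \geq 4 L_\tau(Q)$ depends on the index $\tau$, i.e.\ on the full type history of $B$. The $(1,2)$-entry of $Q_n$ is $c_2^{a_n-1}$, and this governs every $I\to II$ transition; when $a_n$ is large this factor is far smaller than $c_2(a_n+2)^{-3}$, so bands whose history contains such transitions can be \emph{much} shorter than your $\varepsilon_k$. In other words, the set of bands of length $\geq \varepsilon_k$ is a strict (and a priori possibly small) subset of the $q_k$ bands of $\sigma_{k+1,0}$, and your argument never explains why this subset still has $\geq 2^{k/2}$ elements. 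That is exactly the content of the paper's key lemma: it sets up a recursion for the numbers $n_{k,I},n_{k,II},n_{k,III}$ of bands of each type whose length exceeds $\varepsilon_k$, with the constraint $n_{k+1,II}=\mathbf{1}_{\{a_{k+1}\leq 2\}}\,n_{k,I}$ precisely to discard the short type~II bands arising from large $a_{k+1}$, and then shows by a four-case analysis in $(a_k,a_{k-1})$ that $n_{k,II}+n_{k,III} \geq 2(n_{k-2,II}+n_{k-2,III})$, whence $n_{k,II}+n_{k,III}\geq 2^{\lfloor k/2\rfloor}$. Without this, your combination of ``at least $2^{k/2}$ bands'' with ``each at scale $\varepsilon_k$'' is unjustified.

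A second, smaller point: your separation claim (``distinct bands being at distance $>\varepsilon_k$'') also needs the length input. The paper handles this by taking only every other one of the counted long bands; since each skipped band has length $\geq \varepsilon_k$, consecutive selected bands are automatically separated by at least $\varepsilon_k$. This is why only $\tfrac12(n_{k,II}+n_{k,III})$ appears in the final inequality. Your version asserts separation for all bands without this device, which is not available from Liu--Wen alone.
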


\section{Proof of the theorem \ref{thprincipal}}

When one wants to bound all these dynamical quantities for specific models, it is useful to connect them to the qualitative behavior of the solutions of the difference equation 
\begin{equation}\label{diffeq}
\psi (n+1) + \psi (n-1) + V(n) \psi(n) = z \psi (n)
\end{equation}
with  $z \in \C$ and $\psi$ a non-zero vector.

One can reformulate this equation in terms of transfer matrix  .
$$
\binom{\psi (n+1)}{\psi(n)} = F (n,z) \binom{\psi(1)}{\psi(0)}
$$
with
$$
F(n,z) = \begin{cases}
           T(n,z)...T(1,z) & n \geq 1, \\
	   Id & n=0, \\
	   [T(n,z)]^{-1}....[T(0,z)]^{-1} & n \leq -1.
          \end{cases}	
$$
and
$$
T(m,z) = \begin{pmatrix}
           z - V(m) & -1 \\
	   1 & 0
	 \end{pmatrix}
$$

We denote
$$
M_k(z)  = F(q_k,z) = \begin{cases}
           T(q_k,z)...T(1,z) & n \geq 1, \\
	   Id & k=0, \\
	   [T(q_k,z)]^{-1}....[T(0,z)]^{-1} & n \leq -1.
          \end{cases}
$$
.

The following statement allows us to connect transfer matrix norms with dynamical exponents (see \cite{dam1} for details).

\begin{Th}
\label{thdam}
Let  $H_\beta$  be define as in  (\ref{defH}) and $K \geq 4$ is such that $\sigma ( H_\beta) \subseteq [-K+1,K-1]$.
Then, the outside probabilities can be bounded from above in terms of transfer matrix norms as follows:
$$
P_r(N,T) \lesssim \exp (-cN) + T^3 \int_{-K}^K \left( \max_{1\leq q_k \leq N} \left\|M_k(E + \frac{i}{T})
\right\|^2 \right)^{-1} dE,
$$
$$
P_l(N,T) \lesssim \exp (-cN) + T^3 \int_{-K}^K \left( \max_{-N\leq q_k \leq -1} \left\|M_k( E + \frac{i}{T})
\right\|^2 \right)^{-1} dE.
$$
the implicit constants depend only on  $K$ and $c$ is a universal positive constant.
\end{Th}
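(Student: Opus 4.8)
The plan is to run the resolvent method built on the Parseval formula recalled in the introduction, turning the time-averaged outside probability into a spectral integral of a squared Green function, and then to control that kernel scale by scale through the matrices $M_k$. Writing $z=E+\tfrac{i}{T}$, $\varepsilon=\tfrac{1}{T}$, and $G(n,1;z)=\langle(H_\beta-z)^{-1}\delta_1,\delta_n\rangle$, the Parseval identity together with the definition of $a(n,T)$ gives
$$a(n,T)=\frac{1}{\pi T}\int_{-\infty}^{\infty}|G(n,1;z)|^2\,dE,$$
so that, summing over $n>N$,
$$P_r(N,T)=\frac{1}{\pi T}\int_{-\infty}^{\infty}\sum_{n>N}|G(n,1;z)|^2\,dE.$$
I would then split the energy integral as $\int_{|E|>K}+\int_{-K}^{K}$ and treat the two pieces by completely different mechanisms.

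For $|E|>K$ the point $z$ stays at distance at least $1$ from $\sigma(H_\beta)\subseteq[-K+1,K-1]$, so a Combes--Thomas estimate yields $|G(n,1;z)|\le Ce^{-c|n-1|}$ with constants depending only on $K$. Summing over $n>N$ produces a factor $e^{-cN}$, and after integrating in $E$ (using the identity $\sum_n|G(n,1;z)|^2=\varepsilon^{-1}\,\mathrm{Im}\,G(1,1;z)$ to keep the remaining integral under control) this is exactly the advertised $\exp(-cN)$ contribution.

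The core is the window $|E|\le K$. Since $V$ is bounded, $H_\beta$ is limit point at both ends, so for non-real $z$ there are solutions $u_+(\cdot,z)$ and $u_-(\cdot,z)$ of $H_\beta u=zu$ that are $\ell^2$ at $+\infty$ and $-\infty$ respectively, and for $n\ge 1$
$$G(n,1;z)=\frac{u_-(1,z)\,u_+(n,z)}{W(z)},\qquad W(z)=u_+(n+1)u_-(n)-u_+(n)u_-(n+1).$$
Hence $\sum_{n>N}|G(n,1;z)|^2=\dfrac{|u_-(1,z)|^2}{|W(z)|^2}\sum_{n>N}|u_+(n,z)|^2$, and everything reduces to bounding the tail of the decaying solution by transfer matrix norms. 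Here I would exploit that each $T(m,z)$ has determinant $1$, so $F(n,z)$ has singular values $\|F(n,z)\|\ge 1\ge\|F(n,z)\|^{-1}$; the $\ell^2$ solution is forced into the contracted direction, which makes $|u_+(n,z)|$ comparable to $\|F(n,z)\|^{-1}$ once $W$ is normalized. Summing the tail and passing to the hierarchical Sturmian scales $q_k$, i.e. to the matrices $M_k$, then gives
$$\sum_{n>N}|u_+(n,z)|^2\lesssim\Big(\max_{1\le q_k\le N}\left\|M_k(z)\right\|^2\Big)^{-1}.$$

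Finally I would collect the powers of $T$: the prefactor $1/(\pi T)$, together with the $\varepsilon=1/T$ dependence entering through the lower bound on $|W(z)|$ and the normalization of $u_\pm$, assembles into the overall factor $T^3$ in front of the integral, yielding the stated bound; the left-hand estimate $P_l$ follows symmetrically, using $u_-$ and the matrices $M_k$ at negative indices. The hard part is precisely the last two steps made quantitative and uniform in $E\in[-K,K]$: showing that the decaying solution's tail is genuinely governed by $\max_k\|M_k(z)\|^{-2}$, and pinning down the exact $\varepsilon$-power in the Wronskian and boundary factors so that the final exponent is $3$ rather than merely some power of $T$. This $\varepsilon$-bookkeeping, rather than the soft structure of the argument, is where the real work lies.
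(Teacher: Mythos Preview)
The paper does not prove this theorem; it is quoted verbatim from Damanik--Tcheremchantsev \cite{dam1} (the sentence introducing it says ``see \cite{dam1} for details''), so there is nothing in the present paper to compare your argument against. That said, your outline does track the structure of the proof in \cite{dam1}: Parseval to pass to an energy integral of $|G(n,1;E+i/T)|^2$, a Combes--Thomas bound on $|E|>K$ producing the $\exp(-cN)$ term, and transfer-matrix control of the Green function tail on the central window.

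Where your sketch departs from \cite{dam1} and becomes genuinely incomplete is the step ``the $\ell^2$ solution is forced into the contracted direction, which makes $|u_+(n,z)|$ comparable to $\|F(n,z)\|^{-1}$''. This heuristic is not what drives the bound there, and as stated it is not correct uniformly in $E$ and $\varepsilon$: the decaying solution need not sit exactly along the minimal singular vector of $F(n,z)$, and the angle defect can cost arbitrary powers of $\varepsilon$. In \cite{dam1} the tail $\sum_{n>N}|G(n,1;z)|^2$ is controlled directly from the two facts that the column $G(\cdot,1;z)$ is in $\ell^2$ with $\|G(\cdot,1;z)\|_2^2=\varepsilon^{-1}\,\mathrm{Im}\,G(1,1;z)\le \varepsilon^{-2}$ and that it solves the free equation for $n\ge 2$; an elementary $SL(2,\C)$ argument then shows that a large value of $\|F(n,z)\|$ at some $n\le N$ forces the remaining $\ell^2$ tail beyond $N$ to be small, with the loss being exactly two powers of $\varepsilon$. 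Together with the $1/T$ from Parseval this gives the $T^3$. Your Jost-solution/Wronskian route can be made to work, but you would have to prove a quantitative lower bound $|W(z)|\gtrsim \varepsilon$ and a quantitative alignment estimate for $u_+$ with the contracted direction, neither of which is free; the argument in \cite{dam1} sidesteps both by never introducing $u_\pm$ or $W$ explicitly.
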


This theorem connects transfer matrix behavior with a dynamical upper bound in the following way. Choosing $N = N(T) = CT^\alpha$ such that the both integrals decay faster that any inverse power of $T$, implies that $P(N(T),T)$ goes to $0$ faster that any power inverse of $T$. By definition, of $\alpha_u^+$, it follows that $\alpha_u^+ \leq \alpha$. To exhibit such kind of condition, we have to prove the considered energy  is not in the spectrum, then the transfer matrix norm is shown to grow super exponentially.

We shall recall now a few properties of the transfer matrix and their traces.
The transfer matrix sequence verifies the evolution in $k$ (see e.g. \cite{s1},\cite{r1})
\begin{equation}
\label{dyn_mk}
M_{k+1}(z) = M_{k-1}(z) M_k(z)^{a_{k+1}}
\end{equation}

In order to bound from below the sequence of the norm of transfer matrix, it is enough to consider their traces. We recall now the following  result one can find in \cite{r1}.

\begin{Prop}
Let $t_{k,p}$ be the trace of the matrix $M_{k-1}M_k^p$. The evolution along the $p$ index is given by
$$
t_{k,p+1} = t_{k+1, 0} t_{k, p} - t_{k, p-1},
$$
and consequently,
\begin{align}
t_{k,p+1} &=  S_{p} (t_{k+1,0})  t_{k,1} - S_{p-1} (t_{k+1,0})  t_{k,0}  \label{4r}\\
 \label{4r'}        &=  S_{p} (t_{k+1,0})  t_{k,0} - S_{p\pm 1} (t_{k+1,0})  t_{k,-1}
\end{align}
The evolution along the $k$ index is related to the p-evolution by
$$
t_{k+2,0} = t_{k,a_{k+1}},
$$
$$
t_{k+1,1} = t_{k,a_{k+1}+1},
$$
$$
t_{k+1,-1} = t_{k,a_{k+1}-1}.
$$

If one  denotes  by $x_k= t_{k+1,0}$ the trace of $M_k$ and $z_k = t_{k,1}$ the trace
of $M_{k-1}M_k$. This can be  reduce to the usual trace map relation  (\ref{4r})

$$
x_{k+1} = z_k S_{a_{k+1}-1}(x_k) - x_{k-1} S_{a_{k+1}-2}(x_k),
$$
$$
z_{k+1} = z_k S_{a_{k+1}}(x_k) - x_{k-1} S_{a_{k+1}-1}(x_k),
$$
with initial  conditions, $x_{-1}=2$, $x_0 = z$ et $z_0 = z -V$.
\end{Prop}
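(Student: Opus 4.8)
The plan is to treat the entire statement as $\C$-valued $SL(2,\C)$ trace algebra sitting on top of the matrix renormalisation (\ref{dyn_mk}). First I would record the two facts that do all the work. Since $\det T(m,z)=1$ for every $m$, all the $M_k(z)$ lie in $SL(2,\C)$, so for any $A,B\in SL(2,\C)$ one has the trace relations $\mathrm{tr}(AB)+\mathrm{tr}(AB^{-1})=\mathrm{tr}(A)\,\mathrm{tr}(B)$ and $\mathrm{tr}(A)=\mathrm{tr}(A^{-1})$, while Cayley--Hamilton gives $A^2=\mathrm{tr}(A)\,A-I$ and hence, by induction, $A^{\,p}=S_{p-1}(\mathrm{tr}A)\,A-S_{p-2}(\mathrm{tr}A)\,I$, where the $S_p$ are the Chebyshev polynomials normalised by $S_{p+1}=xS_p-S_{p-1}$, $S_0=1$, $S_{-1}=0$ (so that $S_{-2}=-1$).

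The $p$-evolution is then immediate: applying the first trace relation with $A=M_{k-1}M_k^{\,p}$ and $B=M_k$ gives $\mathrm{tr}(M_{k-1}M_k^{\,p+1})+\mathrm{tr}(M_{k-1}M_k^{\,p-1})=\mathrm{tr}(M_{k-1}M_k^{\,p})\,\mathrm{tr}(M_k)$, that is $t_{k,p+1}=t_{k+1,0}\,t_{k,p}-t_{k,p-1}$, using $t_{k+1,0}=\mathrm{tr}(M_k)$. This is a linear three-term recursion in $p$ identical to the defining recursion of the $S_p$; solving it with the two initial data $(t_{k,0},t_{k,1})$ yields $t_{k,p}=S_{p-1}(t_{k+1,0})\,t_{k,1}-S_{p-2}(t_{k+1,0})\,t_{k,0}$, which is (\ref{4r}) after shifting $p\mapsto p+1$, and solving instead with $(t_{k,-1},t_{k,0})$ gives the companion form (\ref{4r'}); the $\pm$ there is merely the freedom in which neighbouring Chebyshev index one chooses to display.

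For the $k$-evolution I would substitute the renormalisation $M_{k+1}=M_{k-1}M_k^{\,a_{k+1}}$ into the traces and use cyclicity. Directly $t_{k+2,0}=\mathrm{tr}(M_{k+1})=\mathrm{tr}(M_{k-1}M_k^{\,a_{k+1}})=t_{k,a_{k+1}}$; next $t_{k+1,1}=\mathrm{tr}(M_kM_{k+1})=\mathrm{tr}(M_{k-1}M_k^{\,a_{k+1}+1})=t_{k,a_{k+1}+1}$ by cyclicity; and $t_{k+1,-1}=\mathrm{tr}(M_kM_{k+1}^{-1})=\mathrm{tr}(M_k^{\,1-a_{k+1}}M_{k-1}^{-1})$, which equals $\mathrm{tr}(M_{k-1}M_k^{\,a_{k+1}-1})=t_{k,a_{k+1}-1}$ after $\mathrm{tr}(X)=\mathrm{tr}(X^{-1})$. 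Feeding $p=a_{k+1}-1$ and $p=a_{k+1}$ into (\ref{4r}), together with $x_k=t_{k+1,0}=\mathrm{tr}(M_k)$, $z_k=t_{k,1}$ and $t_{k,0}=\mathrm{tr}(M_{k-1})=x_{k-1}$, collapses $t_{k+2,0}$ and $t_{k+1,1}$ into precisely the two stated recursions for $x_{k+1}$ and $z_{k+1}$. The initial values are then read off from the base transfer matrices, where $M_{-1}$ is unipotent so $\mathrm{tr}M_{-1}=2$, $M_0$ is the free one-step matrix so $\mathrm{tr}M_0=z$, and $\mathrm{tr}(M_{-1}M_0)=z-V$.

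The only genuinely non-algebraic ingredient is the renormalisation (\ref{dyn_mk}) itself, which encodes the Sturmian word concatenation $s_{k+1}=s_{k-1}s_k^{\,a_{k+1}}$ and which I am assuming as given from \cite{s1,r1}. Granting that, there is no real obstacle, only bookkeeping, and the one point demanding genuine care is the Chebyshev index arithmetic at and below $p=0$: in particular the conventions $S_{-1}=0$, $S_{-2}=-1$ that keep the solution formulas valid down to the initial data, and fixing the base-case normalisation of $M_{-1},M_0$ so that the computed traces match $x_{-1}=2$, $x_0=z$, $z_0=z-V$ rather than an off-by-one variant.
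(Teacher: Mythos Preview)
Your proposal is correct and is essentially the standard derivation. Note, however, that the paper does not actually prove this proposition: it is introduced as ``the following result one can find in \cite{r1}'' and is stated without proof, so there is no in-paper argument to compare against. Your argument---the $SL(2,\C)$ trace identity for the $p$-recursion, the Chebyshev solution of that linear three-term recurrence, and substitution of the renormalisation $M_{k+1}=M_{k-1}M_k^{a_{k+1}}$ together with cyclicity and $\mathrm{tr}(X)=\mathrm{tr}(X^{-1})$ for the $k$-transitions---is precisely the computation behind Raymond's result, and your cautionary remarks about the Chebyshev conventions at negative index and about pinning down $M_{-1},M_0$ so that $x_{-1}=2$, $x_0=z$, $z_0=z-V$ are well placed (indeed, the paper's literal definition $M_k=F(q_k,z)$ with $q_{-1}=0$ would give $M_{-1}=\mathrm{Id}$ and hence $z_0=x_0$, so the intended convention is the standard one $M_{-1}=\begin{pmatrix}1&-V\\0&1\end{pmatrix}$, $M_0=\begin{pmatrix}z&-1\\1&0\end{pmatrix}$).
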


\begin{req}
This two sequences are dependent on $z$ but we will omit it in order to simplify notations.

Here, $S_l$ denotes the $l^{th}$ Tchebychev polynomial of the second kind:
$$
S_{-1}(x) =0,
$$
$$
S_{0}(x) =1,
$$
$$
S_{l+1}(x) = x S_l(x) - S_{l-1}(x), \hspace{5mm} \forall l \geq 0.
$$

\end{req}

The sequence $\{x_k (z)\}_k $ can have two different behaviors depending on $z$. If and only if $z$ lies in the spectrum of $H_\beta$ then this sequence is bounded. A criterium has first been stated by S\"{u}t\H{o} in $\cite{s1}$ for Fibonacci Hamiltonian and extended by Bellissard et Al. in \cite{bel} for other irrational numbers. The attendance of $\delta$ is purely technical and doesn't change the proof.

\begin{Lem}\label{expl}
A necessary and sufficient condition that $\{x_k (z)\}_k $ be unbounded is that
$$
x_{N-1} (z) \leq 2+ \delta , x_{N} (z) > 2+ \delta, z_{N} (z) > 2+ \delta
$$
for some $N \geq 0$. This $N$ is unique.
Denote
$$
G_k = G_{k-1} + a_k G_{k-2}, G_0=1, G_{-1}=1.
$$
We have
$$
|x_{k+1}| \geq |z_k| \geq e^{c G_{k-N}} +1 \hspace{5mm} \forall k > N.
$$
with $c= \log (1+\delta) >0$  constant.
\end{Lem}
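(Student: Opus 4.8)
The plan is to exploit a conserved quantity of the trace map together with elementary Chebyshev estimates. First I would record the invariant: since each transfer matrix lies in $SL(2,\C)$, the Fricke commutator identity shows that
$$
I_k = x_{k-1}^2 + x_k^2 + z_k^2 - x_{k-1}x_k z_k - 4 = \mathrm{tr}\bigl([M_{k-1},M_k]\bigr) - 2
$$
is independent of $k$ (one checks $[M_k,M_{k+1}]$ is conjugate to $[M_{k-1},M_k]^{\pm1}$ using $M_{k+1}=M_{k-1}M_k^{a_{k+1}}$, and trace is inversion-invariant in $SL_2$), and evaluating on the initial data $x_{-1}=2$, $x_0=z$, $z_0=z-V$ gives $I_k=V^2$ for all $k$. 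This invariant is the backbone of the argument: it couples the three traces $x_{k-1},x_k,z_k$ at every step and prevents one of them from being large while the others remain small.

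Next I would collect the facts about the Chebyshev polynomials $S_l$ needed for $|x|>2+\delta$: positivity, strict monotonicity $S_l(x)\ge S_{l-1}(x)\ge 1$ and the gap estimate $S_l(x)-S_{l-1}(x)\ge 1$, together with the crude exponential lower bound coming from $2+\delta=1+e^{c}$. In particular $S_0=1$, $S_{-1}=0$, so the first relation degenerates to $x_{k+1}=z_k$ when $a_{k+1}=1$ and forces $|x_{k+1}|>|z_k|$ when $a_{k+1}\ge 2$; combined with $x_{k-1}\le 2+\delta<z_k$ this is exactly the mechanism behind the asserted inequality $|x_{k+1}|\ge|z_k|$.

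For sufficiency and the growth bound I would argue by induction on $k\ge N$. Assuming the escape condition at $N$, the inductive hypothesis is the joint statement that $|x_{k+1}|\ge|z_k|>2+\delta$, with the correct ordering of the triple maintained, and $|z_k|\ge e^{cG_{k-N}}+1$; the base case $k=N$ is precisely the hypothesis, since $G_0=1$ and $e^{c}+1=2+\delta$. In the inductive step I would insert the relations $x_{k+1}=z_kS_{a_{k+1}-1}(x_k)-x_{k-1}S_{a_{k+1}-2}(x_k)$ and $z_{k+1}=z_kS_{a_{k+1}}(x_k)-x_{k-1}S_{a_{k+1}-1}(x_k)$ into the Chebyshev monotonicity, using the invariant to bound the subtracted terms so that they cannot cancel the leading growth. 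The point that produces the $G$-recursion rather than the faster $q$-recursion is that the factor $S_{a_{k+1}}(x_k)$ contributes $a_{k+1}$ powers of the \emph{smaller} quantity $|x_k|\approx|z_{k-1}|\ge e^{cG_{k-1-N}}$, so $\log|z_{k+1}|\gtrsim\log|z_k|+a_{k+1}\log|x_k|$ reproduces $G_{k+1-N}=G_{k-N}+a_{k+1}G_{k-1-N}$ in the exponent, with the partial quotients reindexed from $N$.

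Finally the necessity and uniqueness. For uniqueness I would note that once the condition triggers at $N$, the growth just established forces $x_k,z_k>2+\delta$ strictly for every $k>N$, so the requirement $x_{N'-1}\le2+\delta$ can never hold again and $N$ is unique. For necessity I would run the contrapositive through the invariant: if the triple never escapes the region $|x_k|\le2+\delta$, then $I_k=V^2$ confines $(x_{k-1},x_k,z_k)$ to a compact set and $\{x_k\}$ stays bounded. I expect the main obstacle to be exactly this direction, namely the careful sign and ordering bookkeeping guaranteeing that the subtracted Chebyshev terms never overtake the leading ones, and the use of the invariant to exclude the degenerate possibility of a large but non-growing trace.
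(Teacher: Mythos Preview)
The paper does not actually prove this lemma: it is quoted from S\"ut\H{o} \cite{s1} (Fibonacci case) and Bellissard--Iochum--Scoppola--Testard \cite{bel} (general Sturmian case), with the remark that inserting the parameter $\delta$ is a purely technical modification. Your outline---Fricke invariant $x_{k-1}^2+x_k^2+z_k^2-x_{k-1}x_kz_k-4=V^2$, Chebyshev monotonicity for $|x|>2+\delta$, and an induction that feeds $S_{a_{k+1}}(x_k)\approx|x_k|^{a_{k+1}}$ back into $\log|z_{k+1}|\gtrsim\log|z_k|+a_{k+1}\log|x_k|$ to reproduce the $G$-recursion---is precisely the argument of those references, so at the level of strategy there is nothing to compare.

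Two places in your sketch deserve tightening. First, the necessity direction: your contrapositive ``the triple never escapes $|x_k|\le2+\delta$'' is not the negation of the stated condition. The negation allows, for each $N$, that one of the three inequalities fails---in particular $|x_N|>2+\delta$ with $|z_N|\le2+\delta$ is not excluded a priori. The actual argument in \cite{bel} first uses the invariant to show that $|x_{k-1}|,|x_k|\le2+\delta$ forces $|z_k|$ bounded, and then shows separately that the configuration $|x_{k-1}|\le2+\delta$, $|x_k|>2+\delta$, $|z_k|\le2+\delta$ cannot persist (via the trace recursion it forces $|x_{k+1}|\le2+\delta$ again). Second, the reindexing you flag is real: as written, $G_{k-N}$ satisfies $G_j=G_{j-1}+a_jG_{j-2}$, whereas the induction produces $G_{j}=G_{j-1}+a_{N+j}G_{j-2}$; the lemma is using the shifted partial quotients implicitly, and you should make that explicit when you write it up.
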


This criterium motivates the following definition:

Denote $\sigma_{k,p} = \left\{ E \in \mathbb{R}, |t_{k,p}(E)| \leq 2  \right\} $.

Denote also $\beta_n = \frac{p_n}{q_n}$ , the rational approximation of $\beta$. It is well known the set $\sigma_{k,0}$ coincide with the spectrum of the operator $H_{\beta_n}$, where $\beta_n$ replace $\beta$ in the definition of $H_\beta$.  The sequence of operator $\{H_{\beta_n}\}$ is called the periodic approximants of $H_\beta$ and converges strongly to $H_\beta$. It is well known spectrum of $H_\beta$ is a Cantor set that can be approximate by the band spectra of the periodic approximants. The following proposition recalls precisely this statement (\cite{s1}, \cite{bel}), \cite{tes}):

\begin{Prop}
The sequence of spectra of periodic approximants of $H_\beta$ is such that

(i) the set $\sigma_{k,p}$ is made of $p q_k + q_{k-1}$ distinct
intervals,

(ii) $\sigma \subset \sigma_{k+1,0} \cup \sigma_{k,0}$ and \hspace{2mm}$\sigma_{k,p+1} \subset \sigma_{k+1,0} \cup \left( \sigma_{k+1,0}^c  \cap \sigma_{k,p}\right), \forall k \in \mathbb{N},$

(iii) $\sigma_{k+1,0}  \cap \sigma_{k,p}\cap \sigma_{k,p-1}= \emptyset, \forall V > 4 \hspace{2mm} and \hspace{2mm} \forall k \in \mathbb{N}, p \geq 0$.
\end{Prop}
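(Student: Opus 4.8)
The plan is to handle the three assertions in turn, exploiting two structural facts: that each $t_{k,p}(E)$ is a polynomial and that $\sigma_{k,p}$ is the spectrum of a genuine periodic operator. First I would observe that $M_{k-1}M_k^p$ is a product of exactly $p q_k + q_{k-1}$ elementary matrices $T(m,z)$, each of which contributes one factor $z$ to its top-left entry. Hence $t_{k,p}(z)$ is a monic polynomial in $z$ of degree $pq_k+q_{k-1}$, and $M_{k-1}M_k^p$ is the monodromy matrix of the periodic approximant obtained by concatenating $p$ copies of the length-$q_k$ block with one length-$q_{k-1}$ block. With this identification, $\sigma_{k,p}=\{E:|t_{k,p}(E)|\le 2\}$ is literally the spectrum of a discrete periodic Schr\"odinger operator of period $pq_k+q_{k-1}$, and $t_{k,p}$ is its discriminant.

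For (i) I would invoke Floquet--Bloch theory (see \cite{tes}): the spectrum of a period-$N$ discrete Schr\"odinger operator equals $\{E:|\Delta(E)|\le 2\}$, a union of closed bands whose number equals $N=\deg\Delta$. Upgrading ``at most $pq_k+q_{k-1}$ bands'' to ``exactly $pq_k+q_{k-1}$ \emph{distinct} intervals'' is the delicate point: I would argue by induction on $p$ using the Tchebychev form (\ref{4r}), in which the factor $S_p(t_{k+1,0})$ subdivides each band of $\sigma_{k+1,0}$ into $p+1$ oscillation cells and interlaces them with the contribution of $t_{k,1}$, so that no two adjacent bands merge into a closed gap. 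This is exactly the combinatorial bookkeeping carried out by Raymond in \cite{r1}, whose count I would follow.

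For (ii), the inclusion $\sigma \subset \sigma_{k+1,0}\cup\sigma_{k,0}$ is immediate from Lemma \ref{expl}: since $t_{k,0}=x_{k-1}$ and $t_{k+1,0}=x_k$, an energy $E$ outside both sets satisfies $|x_{k-1}(E)|>2$ and $|x_k(E)|>2$ for two consecutive indices, which triggers the escape criterion and forces the trace sequence $\{x_j(E)\}_j$ to be unbounded, i.e. $E\notin\sigma$. For the second inclusion I would prove the contrapositive statement, ``if $|t_{k+1,0}|>2$ and $|t_{k,p}|>2$ then $|t_{k,p+1}|>2$'', directly from the recursion $t_{k,p+1}=t_{k+1,0}\,t_{k,p}-t_{k,p-1}$: once $|t_{k+1,0}|>2$ the polynomials $S_p(t_{k+1,0})$ grow geometrically, so the $p$-recursion cannot re-enter $[-2,2]$ after having left it. This is the same escape mechanism as in Lemma \ref{expl}, now run along the $p$ index rather than the $k$ index.

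Finally, for (iii) I would use the Fricke--S\"uto invariant of the trace map. A direct computation with the initial data $x_{-1}=2$, $x_0=z$, $z_0=z-V$, together with invariance of $t_{k,p}^2+t_{k,p-1}^2-t_{k+1,0}\,t_{k,p}\,t_{k,p-1}$ along the $p$-recursion, shows that
$$
t_{k,p}^2 + t_{k,p-1}^2 + t_{k+1,0}^2 - t_{k+1,0}\,t_{k,p}\,t_{k,p-1} - 4 = V^2
$$
independently of $k$ and $p$. If some $E$ lay in $\sigma_{k+1,0}\cap\sigma_{k,p}\cap\sigma_{k,p-1}$, all three traces would lie in $[-2,2]$; maximizing $u^2+v^2+a^2-auv$ over $[-2,2]^3$ gives the value $20$ (attained when $auv=-8$), whence $V^2\le 16$, i.e. $V\le 4$, a contradiction for $V>4$. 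This both proves (iii) and shows the threshold $V>4$ is sharp. I expect the main obstacle to be part (i): the soft degree argument only yields the band count as an upper bound, so establishing that the $pq_k+q_{k-1}$ intervals are genuinely distinct requires the careful inductive control of the Tchebychev decomposition described above.
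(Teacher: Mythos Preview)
The paper does not supply its own proof of this proposition: it is quoted as a known result from \cite{s1}, \cite{bel}, and \cite{tes}. Your sketch follows precisely the approach taken in those references --- Floquet theory plus Raymond's band combinatorics for (i), the trace--escape mechanism for (ii), and the Fricke invariant for (iii) --- so you are reconstructing the literature's argument rather than diverging from it.

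Two points deserve tightening. For the first half of (ii), Lemma~\ref{expl} as stated here requires $|x_{N-1}|\le 2+\delta$, $|x_N|>2+\delta$ \emph{and} $|z_N|>2+\delta$; merely having $|x_{k-1}|>2$ and $|x_k|>2$ does not match that hypothesis directly. You need the extra step (standard in \cite{s1,bel}) that two consecutive large traces force a first crossing index $N\le k$ at which the full hypothesis, including the $z_N$ condition, is met. For the second half of (ii), the claim ``the $p$-recursion cannot re-enter $[-2,2]$ after having left it'' is not automatic for a linear recursion $u_{p+1}=a\,u_p-u_{p-1}$ with $|a|>2$: the decaying solution $u_p=\lambda^{-p}$ shows the sequence can drop back below $2$. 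What \emph{is} true, and what you should invoke, is that if $|u_{p-1}|\le 2$ and $|u_p|>2$ then $|u_{p+1}|\ge |a|\,|u_p|-|u_{p-1}|>2|u_p|-2>|u_p|$, so from the \emph{first} exit onward the sequence is strictly increasing; inducting from $p=0$ then gives the inclusion. Your treatment of (iii) via the invariant is clean and correctly recovers the sharp threshold $V>4$.
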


We recall now important result when one gets interest on periodic approximants spectra. It allows to know the way the intervals of  $\sigma_{k,p}$ are included in $\sigma_{k-1,p}$. It requires some definitions:

\begin{definition} For a given $k$, we call

type I gap : a band of $ \sigma_{k,1}$ included in a band of $ \sigma_{k,0}$ and therefore in a gap of $ \sigma_{k+1,0}$,

type II band : a band of $ \sigma_{k+1,0}$ included in a band of $ \sigma_{k,-1}$ and in a gap of $ \sigma_{k,0},$

type III band : a band of $ \sigma_{k+1,0}$ included in a band of $ \sigma_{k,0}$ and in a gap of $ \sigma_{k,1}$.
\end{definition}

It is proved in \cite{r1} these definitions exhaust all the possible configuration with the  following lemma

\begin{Lem}(Raymond) \label{raymond}
At a given level $k$,

(i) a type I gap contains an unique type  II band of $\sigma_{k+2,0}$.

(ii) a type II band contains $(a_{k+1}+1)$ bands of type I of $\sigma_{k+1,1}$. They are alternated with $(a_{k+1})$ type III bands of $\sigma_{k+2,0}$

(iii) a type III band contains $(a_{k+1})$ bands of type I of $\sigma_{k+1,1}$. They are alternated with $(a_{k+1}-1)$ type III bands of $\sigma_{k+2,0}$
\end{Lem}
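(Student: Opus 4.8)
The plan is to fix the level $k$ and to study all the trace polynomials $t_{k,m}(E)$, for $m$ running from $-1$ up to $a_{k+1}+1$, by restricting them to a single band of the coarser set $\sigma_{k+1,0}$ and to the gaps adjacent to it. Everything reduces to the $t_{k,m}$ because of the identities $t_{k+2,0}=t_{k,a_{k+1}}$, $t_{k+1,1}=t_{k,a_{k+1}+1}$ and $t_{k+1,-1}=t_{k,a_{k+1}-1}$ recalled in the Proposition. Thus classifying and counting the bands of $\sigma_{k+2,0}$ and $\sigma_{k+1,1}$ of each type amounts to counting and labelling the sets $\{|t_{k,a_{k+1}}|\le 2\}$ and $\{|t_{k,a_{k+1}+1}|\le 2\}$ lying inside a fixed type II or type III band, or inside a type I gap.

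The engine is the explicit form of $t_{k,m}$ coming from the Chebyshev expansion (\ref{4r}). On a band of $\sigma_{k+1,0}$ one has $|x_k|\le 2$, and since $x_k(E)$ is the discriminant of the period-$q_k$ approximant it is monotone across each such band; I set $x_k=2\cos\theta$ with $\theta\in(0,\pi)$ and use $S_m(2\cos\theta)=\sin((m+1)\theta)/\sin\theta$ to rewrite (\ref{4r}) as a single oscillation $t_{k,m}(\theta)=R(\theta)\cos(m\theta-\phi(\theta))$. The amplitude is controlled by the invariant of the trace map: starting from the initial data $x_{-1}=2$, $x_0=z$, $z_0=z-V$ one checks that $t_{k,0}^2+x_k^2+t_{k,1}^2-t_{k,0}\,x_k\,t_{k,1}-4=V^2$ for every $k$, and substituting this into the amplitude gives the clean formula $R(\theta)^2=4+V^2/\sin^2\theta>4$. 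Because $R>2$ throughout the interior of the band, the set $\{|t_{k,m}|\le 2\}$ consists of thin intervals clustered around the points where $\cos(m\theta-\phi(\theta))$ vanishes, so the bands of $\sigma_{k,m}$ inside a fixed band of $\sigma_{k+1,0}$ are counted exactly by the number of half-periods swept by the phase $\Phi_m(\theta)=m\theta-\phi(\theta)$ as $\theta$ runs from $0$ to $\pi$ (with $V>4$ guaranteeing, via part (iii) of the Proposition, that the relevant bands stay disjoint so that each half-period yields exactly one band).

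Next I would carry out this phase count. The leading term $m\theta$ contributes $m$ half-periods, and the correction comes entirely from the boundary values $\phi(0^+)$ and $\phi(\pi^-)$; these are pinned down by the edge values of $t_{k,0}$ and $t_{k,1}$, which in turn are fixed by whether the ambient band of $\sigma_{k+1,0}$ is of type II (in a gap of $\sigma_{k,0}$, on a band of $\sigma_{k,-1}$) or type III (on a band of $\sigma_{k,0}$, in a gap of $\sigma_{k,1}$). Tracking the membership inequalities $|t_{k,0}|\le 2$ and $|t_{k,1}|\le 2$ along the same $\theta$-interval then labels each produced band. Since $\Phi_{a_{k+1}+1}$ and $\Phi_{a_{k+1}}$ differ by one $\theta$ per step, the bands of $\sigma_{k+1,1}$ ($m=a_{k+1}+1$, type I) interlace with the bands of $\sigma_{k+2,0}$ ($m=a_{k+1}$, type III); reading off the endpoint conventions produces the counts $a_{k+1}+1$ and $a_{k+1}$ in the type II case and $a_{k+1}$ and $a_{k+1}-1$ in the type III case, which is exactly the alternation asserted in (ii) and (iii).

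For part (i) the analysis runs in a gap of $\sigma_{k+1,0}$, where $|x_k|>2$; here I switch to the hyperbolic parametrization $x_k=\pm 2\cosh\tau$, so that $S_m(x_k)=(\pm1)^m\sinh((m+1)\tau)/\sinh\tau$ grows and forces $|t_{k,a_{k+1}}|>2$ except on at most one interval. A sign/monotonicity argument then shows that inside a type I gap, where both $|t_{k,0}|\le 2$ and $|t_{k,1}|\le 2$ hold, exactly one band of $\sigma_{k+2,0}$ survives, and checking the two defining inequalities identifies it as type II. I expect the genuinely delicate part to be the edge bookkeeping in the elliptic case: controlling $\phi(\theta)$ as $\theta\to 0,\pi$, where $\sin\theta\to 0$ makes $R$ blow up and one of the defining quantities degenerate, and deciding whether a given half-period of $\Phi_m$ falls just inside or just outside the band edge. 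Getting these endpoint conventions right is precisely what separates the count $a_{k+1}+1$ from $a_{k+1}$, and the global band count $p\,q_k+q_{k-1}$ together with the inclusion relations (ii)--(iii) of the Proposition will serve as the consistency check that all the local counts sum correctly.
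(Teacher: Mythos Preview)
The paper does not actually supply a proof of this lemma: it is quoted from Raymond's preprint \cite{r1} and used as input. So there is no ``paper's own proof'' to compare against here; the relevant benchmark is Raymond's original argument.

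Your outline is essentially Raymond's proof. The parametrization $x_k=2\cos\theta$ on a band of $\sigma_{k+1,0}$, the Chebyshev identity $S_m(2\cos\theta)=\sin((m{+}1)\theta)/\sin\theta$ turning $t_{k,m}$ into a single sinusoid, and the use of the Fricke--Vogt invariant $t_{k,0}^2+x_k^2+t_{k,1}^2-t_{k,0}x_kt_{k,1}-4=V^2$ to obtain the clean amplitude $R(\theta)^2=4+V^2/\sin^2\theta>4$ are exactly the ingredients Raymond uses. The reduction of the band count to the number of half-periods of the phase $\Phi_m$, the interlacing of the $m=a_{k+1}$ and $m=a_{k+1}+1$ families, and the hyperbolic treatment of the gaps for part (i) are all correct and standard.

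You have also correctly located the only nontrivial work: the endpoint analysis of $\phi(\theta)$ as $\theta\to 0^+,\pi^-$, where $R\to\infty$ and the relevant band either does or does not reach the edge. One point you leave implicit and should make explicit when you write this up is the \emph{monotonicity} of $\Phi_m(\theta)$ in $\theta$ (equivalently in $E$) on each band, not merely its total variation; without it a half-period could in principle produce zero or two bands rather than one. In Raymond's argument this follows from the sign structure of $t_{k,0},t_{k,1}$ on type II versus type III bands together with the derivative relations for Chebyshev polynomials, and it is what makes the edge bookkeeping unambiguous. Once that is in place, the counts $(a_{k+1}{+}1,a_{k+1})$ versus $(a_{k+1},a_{k+1}{-}1)$ drop out exactly as you describe, and the global consistency check against $pq_k+q_{k-1}$ goes through.
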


As stated above, $H_{\beta_n}$ spectrum is made by a growing number of intervals of decreasing length as $n$ is increasing.
We recall now a result obtain in \cite{chin1} which allows to control the length of the bands of $\sigma_{k,p}$ at any level $k$. We need again some notations to resume it:

Let  $\mathcal{A} = \{I,II,III\}$ be an alphabet. For each band $B$ of spectrum at level $k$, correspond an unique word $i_0i_1\dots i_k \in \mathcal{A}^{n+1}$ such that $B$ is  a band of type $i_k$ included in a band of type
$i_{k-1}$ at level $k-1$,...,included in a band of type $i_0$ at level 0.
This word will be called the $index$ of $B$. More than one band can have the same index.
Let $T_n = (t_{i,j}(n))_{3*3}$ be a sequence of matrix and $\tau = i_0i_1\dots i_k$ an index, we define:
$$
L_\tau (T) = t_{i_0,i_1}(1)t_{i_1,i_2}(2)\dots t_{i_{k-1},i_k}(k).
$$

We can now recall the result in \cite{chin1}:

\begin{Th}\label{thchinois}(Liu, Wen)
 If $\beta = [a_1,a_2...]$ be an irrational number in  $[0,1]$ and $H_\beta$ define as above with $V> 20$ then any band   $B$ of index $\tau$ verify,
$$
4 L_\tau (Q) \leq |B| \leq 4 L_\tau (P)
$$

 where $P=(P_n)_{n>0}$
$$
P_n = \begin{pmatrix}
0 & c_1^{a_n-1} & 0 \\ c_1/a_n & 0 & c_1/a_n \\c_1/a_n & 0 & c_1/a_n
\end{pmatrix}
$$
with $c_1= \frac{3}{V-8}$
and $Q=(Q_n)_{n>0}$
$$
Q_n =\begin{pmatrix}
0 & c_2^{a_n -1} & 0 \\ c_2 (a_n +2)^{-3} & 0 & c_2 (a_n +2)^{-3} \\ c_2 (a_n
+2)^{-3} &0 & c_2 (a_n +2)^{-3} \\
\end{pmatrix}
$$
with $c_2= \frac{1}{V+5}$.
\end{Th}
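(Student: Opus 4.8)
The plan is to reduce the two-sided length estimate to a product of one-level contraction ratios, each governed by a single entry of the matrices $P_n$ and $Q_n$, and then to telescope this product along the hierarchy of ancestor bands supplied by Raymond's Lemma \ref{raymond}.

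First I would express the length of an individual band through the derivative of the associated trace polynomial. If $B$ is a band of $\sigma_{k,p}$, then by definition $|t_{k,p}|\le 2$ on $B$, and for $V>20$ the polynomial $t_{k,p}$ is monotone there, mapping $B$ onto $[-2,2]$; the mean value theorem then produces a point $\xi\in B$ with
$$
|B| = \frac{4}{|t_{k,p}'(\xi)|}.
$$
This accounts for the universal prefactor $4$ in the statement and converts the problem into estimating trace derivatives. For a band $B$ of index $\tau=i_0i_1\dots i_k$, write $B^{(n)}$ for its ancestor band of type $i_n$ at level $n$, so that $B^{(k)}=B$; the nesting $B^{(k)}\subset B^{(k-1)}\subset\dots\subset B^{(0)}$ is exactly the one enumerated by Lemma \ref{raymond}, and parts (ii)--(iii) of the band-structure proposition guarantee it is well defined with the bands disjoint. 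I would then use the telescoping identity
$$
|B| = |B^{(0)}|\prod_{n=1}^{k}\frac{|B^{(n)}|}{|B^{(n-1)}|},
$$
where $|B^{(0)}|$ is the length of the base band $[-2,2]$, equal to $4$, which supplies the prefactor.

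The core of the argument is the one-step estimate: for each admissible transition from a type $i_{n-1}$ band at level $n-1$ to a type $i_n$ band at level $n$,
$$
(Q_n)_{i_{n-1},i_n}\ \lesssim\ \frac{|B^{(n)}|}{|B^{(n-1)}|}\ \lesssim\ (P_n)_{i_{n-1},i_n}.
$$
Once this holds for every transition type, multiplying over $n=1,\dots,k$ yields $4L_\tau(Q)\le|B|\le 4L_\tau(P)$ directly. To obtain it I would differentiate the trace recursion $t_{k,p+1}=S_p(t_{k+1,0})t_{k,1}-S_{p-1}(t_{k+1,0})t_{k,0}$ in $E$, combine it with the level-change relations $t_{k+2,0}=t_{k,a_{k+1}}$ and $t_{k+1,\pm1}=t_{k,a_{k+1}\pm1}$, and compare $|t'|$ on the sub-band with $|t'|$ on its parent. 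The case distinction by type corresponds precisely to whether the relevant Chebyshev factor $S_{a_n}$ is evaluated inside a gap or on the spectrum.

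The main obstacle is establishing these one-step estimates with the correct dependence on the partial quotient $a_n$: the exponential factor $c_1^{a_n-1}$ (resp.\ $c_2^{a_n-1}$) for the type I $\to$ II transition, against the polynomial factors $c_1/a_n$ and $c_2(a_n+2)^{-3}$ for the remaining transitions. These reflect two regimes for $S_{a_n}$. When the parent band lies in a gap of an intermediate approximant, the argument of $S_{a_n}$ has modulus exceeding $2$; writing it as $\pm2\cosh t$ gives $S_{a_n}=\sinh((a_n+1)t)/\sinh t$, so both $S_{a_n}$ and $S_{a_n}'$ grow like $e^{a_n t}$ and the band contracts exponentially. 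When instead the parent band stays on the spectrum, the argument can be written $2\cos\theta$, whence $S_{a_n}=\sin((a_n+1)\theta)/\sin\theta$ oscillates with amplitude only polynomial in $a_n$; the discrepancy between the upper power $a_n^{-1}$ and the lower power $(a_n+2)^{-3}$ arises from having to bound $|S_{a_n}'|$ uniformly over all $a_n+1$ (or $a_n$) type I sub-bands of Lemma \ref{raymond}, whose extreme and central members realize the two extremes of the derivative. Making these derivative bounds quantitative and uniform in $a_n$ and in the energy is the delicate technical step; the hypothesis $V>20$ and the explicit constants $c_1=3/(V-8)$, $c_2=1/(V+5)$ enter exactly to keep $t_{k,p}$ monotone on each band, to separate consecutive bands, and to secure the claimed margins in the derivative estimates.
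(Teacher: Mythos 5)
First, a point of orientation: the paper does not prove this theorem at all --- it is recalled verbatim from the cited work of Liu and Wen \cite{chin1}, so there is no in-paper argument to compare yours against. Measured against what such a proof must contain (and against the strategy of the cited source, which your skeleton does resemble), your outline has the right architecture: monotonicity of the trace polynomial on each band, the mean value theorem converting lengths into derivative bounds, the nested hierarchy of ancestor bands from Lemma \ref{raymond}, and the dichotomy between the hyperbolic regime $S_l(2\cosh t)=\sinh((l+1)t)/\sinh t$ (type I $\to$ II, producing the entries $c^{a_n-1}$) and the oscillatory regime $S_l(2\cos\theta)=\sin((l+1)\theta)/\sin\theta$ (the remaining transitions). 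Your explanation of the discrepancy between $c_1/a_n$ and $c_2(a_n+2)^{-3}$ is also the correct one: the maximal derivative of $S_l$ on $[-2,2]$ is attained at the endpoints, where $S_l'(\pm 2)=l(l+1)(l+2)/6\sim l^3/6$, so sub-bands near the edge of their parent contract like $a_n^{-3}$ while central ones contract like $a_n^{-1}$.

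There are nevertheless two genuine gaps. The structural one: you state the one-step estimate as $(Q_n)_{i_{n-1},i_n}\lesssim |B^{(n)}|/|B^{(n-1)}|\lesssim (P_n)_{i_{n-1},i_n}$ with implicit constants, and then claim that multiplying over $n=1,\dots,k$ yields $4L_\tau(Q)\le|B|\le 4L_\tau(P)$ ``directly.'' It does not: an implicit constant $C$ per step compounds to $C^k$ after telescoping, which is fatal here, because the entire point of the theorem (and of its use in Theorems \ref{thprincipal} and \ref{thdimboite}, where these bounds are raised to powers growing with $k$) is that \emph{all} of the level dependence is carried by the matrix entries and none by prefactors. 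So each transition estimate must be proved with constant exactly $1$, i.e.\ with the matrix entries of $P_n$ and $Q_n$ themselves as the bounds; this is precisely where the hypothesis $V>20$ and the specific values $c_1=3/(V-8)$, $c_2=1/(V+5)$ must be spent. The second gap is that these exact per-step derivative estimates --- uniform in $a_n$, in the energy, and in the position of the sub-band inside its parent --- are the whole mathematical content of the theorem, and your proposal explicitly defers them as ``the delicate technical step.'' What remains once they are removed is a correct roadmap and a persuasive heuristic for the shape of the matrix entries, but not a proof: nothing in the proposal actually establishes any inequality between a band length and a product $L_\tau(P)$ or $L_\tau(Q)$.
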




By now, we define the periodic approximants spectrum not only in $\mathbb{R}$ but in $\mathbb{C}$.
$$
\sigma_{k,0}^\delta = \{z \in \mathbb{C} : |x_k(z)| \leq 2 + \delta\}
$$

All the proposition above keep true replacing $\sigma_{k,p}$ by $\sigma_{k,p}^\delta$ for some small enough fixed $\delta$. A condition on $V$  should be added to keep the  invariant formula, $V > \lambda(\delta) = [12(1+\delta)^2+8(1+\delta)^3+4]^{1/2}$.

The following proposition states, due to classical Koebe distortion theorem, the height of this set  is almost the same that its length.

\begin{Prop}
If $k \geq 3$, $\delta >0$ and $V>20$ then there exists constants
$c_\delta$,$d_\delta>0$ such that
$$
\bigcup_{j=1}^{q_{k-1}} B ( x_k^{(j)},r_k) \subseteq \sigma_{k,0}^\delta
\subseteq \bigcup_{j=1}^{q_{k-1}} B ( x_k^{(j)},R_k)
$$
where $\{x_k^{(j)}\}_{1 \leq j \leq q_{k-1}}$ are the zeros of $x_k$,
$r_k = c_\delta \inf_{\tau}  L_\tau (Q)$ and $ R_k = d_\delta \sup_{\tau}  L_\tau(P)$.
\end{Prop}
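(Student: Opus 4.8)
The plan is to realise $\sigma_{k,0}^\delta$ as the preimage $x_k^{-1}\big(\overline{D(0,2+\delta)}\big)$ of a disk under the trace polynomial $x_k$, and to analyse this preimage one band at a time, using the univalence of $x_k$ near each band together with the Koebe distortion theorem. The output should then be assembled by taking a union over bands.

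First I would record the real-variable structure of $x_k$. It is a real polynomial whose real band set $\{x\in\mathbb{R}:|x_k(x)|\le 2\}$ is a disjoint union of closed intervals, namely the bands of the corresponding periodic approximant. On each such band $x_k$ is strictly monotone and maps it onto $[-2,2]$, so it carries exactly one simple zero $x_k^{(j)}$; all critical points of $x_k$ lie in the gaps, where $|x_k|>2$. The condition $V>\lambda(\delta)$ guarantees that the local extrema of $x_k$ in every gap exceed $2+\delta$ in modulus, so that the fattened components of $\{|x_k|\le 2+\delta\}$ remain pairwise disjoint, each still indexed by a single zero $x_k^{(j)}$ exactly as in the statement.

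Next, for each $j$ I would introduce the local inverse branch $g_j$ of $x_k$ normalised by $g_j(0)=x_k^{(j)}$. By the previous step there is a fixed $\delta'>\delta$ (depending only on $\delta$ and the uniform gap depth) for which $g_j$ extends univalently to the component of $x_k^{-1}\big(D(0,2+\delta')\big)$ containing $x_k^{(j)}$. Applying the Koebe distortion theorem to $g_j$ then yields the sandwich
$$
B\big(x_k^{(j)},\,c'_\delta (2+\delta)|g_j'(0)|\big)\subseteq g_j\big(D(0,2+\delta)\big)\subseteq B\big(x_k^{(j)},\,d'_\delta (2+\delta)|g_j'(0)|\big),
$$
so that the complex $\delta$-band around $x_k^{(j)}$ is trapped between two disks whose radii are comparable, with constants depending only on $\delta$, to $|g_j'(0)|=1/|x_k'(x_k^{(j)})|$. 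It then remains to convert $1/|x_k'(x_k^{(j)})|$ into a band length. Since $x_k$ is monotone from $-2$ to $2$ across the real band $B_j$ with boundedly distorted derivative (again by Koebe along $B_j$), one has $|B_j|\asymp 4/|x_k'(x_k^{(j)})|$ with universal constants; combining this with the Liu--Wen estimate $4L_\tau(Q)\le|B_j|\le 4L_\tau(P)$ of Theorem \ref{thchinois} and taking $\inf_\tau$ and $\sup_\tau$ replaces the individual radii by the uniform values $r_k=c_\delta\inf_\tau L_\tau(Q)$ and $R_k=d_\delta\sup_\tau L_\tau(P)$. A union over the disjoint bands gives the claimed double inclusion.

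The main obstacle is \emph{uniformity}: the constants $c_\delta,d_\delta$ must be independent of $k$ and of the particular band $j$. This forces one to show that the univalence domain of $g_j$ has conformal size comparable to that of its band, for every band simultaneously, i.e. that the configuration of bands and gaps has bounded geometry at every level. That bounded geometry is precisely what the gap-depth estimate from $V>\lambda(\delta)$, the Raymond combinatorics of Lemma \ref{raymond}, and the length control of Theorem \ref{thchinois} together supply; verifying that these feed a single pair of constants into the Koebe bound, uniformly in $k$, is the delicate point.
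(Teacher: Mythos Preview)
Your proposal is correct and follows essentially the same route as the paper: local inverse of $x_k$ at each zero, Koebe distortion, then the Liu--Wen band-length bounds to replace $|x_k'(x_k^{(j)})|^{-1}$ by the uniform radii $r_k,R_k$.

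The uniformity issue you flag as ``the delicate point'' is handled more simply in the paper than your last paragraph suggests. Rather than invoking bounded geometry via Raymond's combinatorics, the paper just takes the univalence domain to be the connected component of $\sigma_{k,0}^{2\delta}$ containing $x_k^{(j)}$: since $V>\lambda(2\delta)$, these components are still pairwise disjoint and each carries a single zero, so $x_k$ has degree one there and $x_k^{-1}$ is univalent on all of $B(0,2+2\delta)$. Koebe is then applied at the \emph{fixed} radius ratio $(2+\delta)/(2+2\delta)$, and the resulting distortion constants depend only on $\delta$, automatically uniform in $k$ and $j$. No gap-depth or band-combinatorics input beyond the single inequality $V>\lambda(2\delta)$ is needed for this step.
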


\begin{proof}
The proof follows the same steps that in \cite{dam1}. Let $C_j$ be a connected component of  $\sigma_{k,0}^{2\delta}$. With  $V> \max \{20, \lambda(2\delta)\}$, it is easy to see that $C_j$ contains
exactly one of a $q_{k-1}$ zeros of  $\sigma_{k,0}^\delta$, $x_k^{(j)}$.
And it is clear that  $C_j$ contains one connected component of
$\sigma_{k,0}^\delta$, denoted by $\tilde{C_j}$.  It suffice
to show that
\begin{equation}
B ( x_k^{(j)},r_k) \subseteq \tilde{C_j}
\subseteq  B ( x_k^{(j)},R_k).
\end{equation}
to obtain the result.

As $x_k$ is a proper function (as a polynom of $z$)  and  $C_j$ contains an unique
zero, its degree is 1.
$$
x_k : \text{int} (C_j) \to B(0,2 + 2\delta)
$$
is univalent and so
$$
x_k^{-1} : B(0,2+2\delta) \to \text{int} (C_j)
$$
is well define and univalent too.
Consequently, the function
$$
F : B(0,1) \to \C, F(z) = \frac{x_k^{-1}((2+2\delta)z) - x_k^{(j)}}{(2+2\delta)(x_k^{-1})'(0)}
$$
is univalent on $B(0,1)$. clearly, we have $F(0)=0$ and $F'(0)
=1$.

Applying  Koebe distortion theorem , we get
$$
\frac{|z|}{(1+|z|)^2} \leq |F(z)| \leq \frac{|z|}{(1-|z|)^2}, |z| \leq 1.
$$

Evaluating this for $|z| = \frac{2+\delta}{2+2\delta}$, one has
$$
\frac{(2+\delta)(2+2\delta)}{(4+3\delta)^2} \leq F(z) \leq \frac{(2+\delta)(2+2\delta)}{\delta^2}
$$
By definition of $F$ this implies
$$
|x_k^{-1} ((2+2\delta)z) - x_k^{(j)}| \leq \frac{(2+\delta)(2+2\delta)}{\delta^2}
|(x_k^{-1})'(0)|
$$
$$
|x_k^{-1} ((2+2\delta)z) - x_k^{(j)}| \geq \frac{(2+\delta)(2+2\delta)}{(4+3\delta)^2}
|(x_k^{-1})'(0)|
$$
And then for  $|z|=2+\delta$,
$$
|x_k^{-1} (z) - x_k^{(j)}| \leq \frac{(2+\delta)(2+2\delta)}{\delta^2}
|(x_k^{-1})'(0)|
$$
$$
|x_k^{-1} (z) - x_k^{(j)}| \geq \frac{(2+\delta)(2+2\delta)}{(4+3\delta)^2}
|(x_k^{-1})'(0)|
$$
It suffices with $|(x_k^{-1})'(0)| = |x_k'(x_k^{(j)})|$ to remark that
$$
r_k \leq |(x_k^{-1})'(0)| \leq R_k
$$
and with $|z|=2+\delta$, $x_k^{-1} (z)$ runs through the entire boundary of $\tilde{C_j}$ to conclude.

\end{proof}

We stated the  following technical  lemma to replace  $\{G_k\}_k$ by $\{q_k\}_k$
\begin{Lem}\label{lemma4}
If $D
= \limsup_k \frac{\log q_k}{ k}$ is finite  then  there exists a constant
$d>1$ such that
$$
G_k^d > q_k
$$
for all level  $k$.
\end{Lem}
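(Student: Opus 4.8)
The plan is to reduce the claimed inequality $G_k^d > q_k$ to a uniform upper bound on the ratio $\frac{\log q_k}{\log G_k}$. Taking logarithms, for $k \ge 1$ (where $G_k > 1$) the inequality $G_k^d > q_k$ is equivalent to $d > \frac{\log q_k}{\log G_k}$, so it suffices to produce a finite constant $M$ with $\frac{\log q_k}{\log G_k} \le M$ for all $k \ge 1$ and then take $d = M+1 > 1$. (The case $k=0$ is degenerate, $G_0 = q_0 = 1$, and is excluded or treated trivially, which is why we will insist on $d$ strictly larger than $M$.) The statement thus splits into a lower bound for $\log G_k$ and an upper bound for $\log q_k$.

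For the lower bound I would exploit that every partial quotient satisfies $a_k \ge 1$. Hence $G_k = G_{k-1} + a_k G_{k-2} \ge G_{k-1} + G_{k-2}$, so the sequence $\{G_k\}$ dominates term by term the Fibonacci sequence with the same initial data $G_{-1}=G_0=1$; an immediate induction gives $G_k \ge F_{k+2}$, where $F$ denotes the Fibonacci numbers. Since $F_{k+2} \ge \varphi^{\,k}$ with $\varphi = \frac{1+\sqrt 5}{2}$, this yields the clean geometric bound $\log G_k \ge k\log\varphi$ for all $k \ge 0$. The point of this step is that $\{G_k\}$ grows at least geometrically at a fixed rate $\log\varphi > 0$, no matter how large the $a_k$ are.

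For the upper bound I would use the hypothesis directly. Since $\log q_k \ge 0$ and $D = \limsup_k \frac{\log q_k}{k} < \infty$, the sequence $\frac{\log q_k}{k}$ is bounded above (finite limsup means it is eventually bounded, and the finitely many initial terms are finite), so $D' := \sup_k \frac{\log q_k}{k} < \infty$ and $\log q_k \le D' k$ for every $k$. Combining the two estimates gives $\frac{\log q_k}{\log G_k} \le \frac{D' k}{k\log\varphi} = \frac{D'}{\log\varphi}$ for all $k \ge 1$, a bound independent of $k$. Taking $d = \frac{D'}{\log\varphi} + 1$ then gives $d\log G_k \ge d\,k\log\varphi > D'k \ge \log q_k$, that is $G_k^d > q_k$, for all $k \ge 1$, with $d>1$ as required.

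The argument is short, and the only genuinely substantive ingredient is the Fibonacci domination $G_k \ge F_{k+2}$: it is what makes the lower growth rate $\log\varphi$ uniform across all Sturmian parameters, so that the finiteness of $d$ is governed entirely by the hypothesis $D<\infty$ through the upper bound on $\log q_k$. I expect no real obstacle beyond this; the main care needed is the bookkeeping that turns a limsup hypothesis into a bound valid for every $k$ (not merely asymptotically) and the harmless degeneracy at $k=0$, handled by taking $d$ strictly above $D'/\log\varphi$. As a heuristic consistency check, in the worst case of a large constant partial quotient $a_k \equiv a$ one finds $\frac{\log q_k}{\log G_k}\to 2$, which is the source of the factor $2$ in Theorem~\ref{thprincipal}, although the crude existence statement proved here does not require identifying this optimal value.
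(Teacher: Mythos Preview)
Your proof is correct and follows essentially the same approach as the paper's own argument: both use the Fibonacci lower bound $G_k \ge \varphi^k$ (from $a_j \ge 1$) together with the geometric upper bound $q_k \le c^k$ extracted from the hypothesis $D<\infty$, and then take $d$ larger than $\log c/\log\varphi$. Your version is more careful about the bookkeeping (the passage from $\limsup$ to a uniform bound, the degenerate case $k=0$), but the underlying idea is identical.
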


\begin{proof}
By assumption, there exists   a constant $c$ 
such that
$$
q_k \leq c^k
$$
for all $k$.
It is clear by the definitions of the two sequences that
$$
G_k \leq q_k
$$
and that $G_k$ is bigger that the fibonacci sequence. It suffices to choose $d$ greater than  $\frac{\log c}{\log
\omega}$ where $\omega$ is the golden mean.
\end{proof}


\begin{proof}[Proof of theorem \ref{thprincipal}]

We have now all the  required tools to finish the proof of the theorem \ref{thprincipal}.

As $x_k^{(j)}$ are real, we have
$$
\sigma_{k,0}^\delta \subseteq \{ z \in \C \text{:} |Im   z | < R_k\} \subseteq \{z \in \C \text{:} |Im
 z | <d
q_k^{-\gamma(V)}\}.
$$
for a suitable $\gamma (V)$ .
This implies 
\begin{equation}
\label{eq_expl}
\sigma_{k,0}^\delta \cup \sigma_{k,1}^\delta \subseteq \{z \in \C \text{:} |Im  z | < d
q_k^{-\gamma(V)}\}.
\end{equation}

Let us precise how to choose  $\gamma(V)$.

We need to major $R_k$.
 $R_k$ is  the  supremum of products of $k$ elements
of matrix $P_n$. All the coefficients in $P_n$ are maximal for $a_n=1$. The worst case possible  happens when a band has a index history
type I containing a band of type II, in that case the coefficient
 could be trivial equal to 1 (if $a_n=1$). But because of combinatoric behavior of bands described by the proposition \ref{raymond}, this situation
can't occur more than half of the time.
Consequently this implies
$$
R_k \leq c_1^{k/2}
$$

We should have
$
R_k < d q_k^{-\gamma(V)}
$
so a suitable $\gamma$ can be chosen  by taking:
$$
\gamma (V) \leq \limsup_k  - \frac{k\log c_1}{2  \log q_k }.
$$

For $\varepsilon = Im z >0$, we get a lower  uniform lower bound for $|x_n(E+i\varepsilon)|$ with $E \in [-K,K] \subset \R$. For a fixed
$\varepsilon>0$, we choose $k$ such that $ d
q_k^{-\gamma(V)} < \varepsilon$. With (\ref{eq_expl}), this shows $|x_k (E +
i\varepsilon)| > 2+\delta$ and $|z_k (E +i\varepsilon)| > 2+\delta$. As $|x_{-1}(E +
i\varepsilon)| =2 \leq 2+\delta$ we are in the situation of the proposition \ref{expl}
and we have the bound
$$
|x_j| \geq e^{\log(1+\delta) G_{j-k}} +1 \hspace{5mm} \forall j > k.
$$
And applying Lemma \ref{lemma4}, we obtain
\begin{equation}
|x_j| \geq e^{\log(1+\delta) q_{j-k}^d} +1 \hspace{5mm} \forall j > k.
\end{equation}

All this motivates the following definitions:
\begin{definition}
 For $\delta >0, T>1$, denote by $k(T)$  the unique
integer with
$$
\frac{q_{k(T)-1}^{\gamma(V)}}{d_\delta} \leq T \leq \frac{q_{k(T)}^{\gamma(V)}}{d_\delta}
$$
and let
$$
N(T) = q_{k(T) + \lfloor \sqrt{k(T)} \rfloor}.
$$
\end{definition}

It is then easy to see for $T$ large enough and 
for every  $\nu > 0$, that we have a constant $C_\nu >0 $ such that
\begin{equation}{\label{estiN(T)}}
N(T) \lesssim C_\nu T^\frac{1}{\gamma (V)} T^\nu.
\end{equation}
Applying theorem \ref{thdam} and above estimate, we get
\begin{align*}
P_d (N(T),T) &\lesssim exp(-cN(T)) + T^3 \int_{-K}^K \left( \max_{1\leq q_n \leq N(T)}
\left\| M_n( E + \frac{i}{T})
\right\|^2 \right)^{-1} dE,) \\
	     &\lesssim exp(-cN(T)) + T^3 e^{-2\log(1+\delta) G_{\lfloor
\sqrt{k(T)} \rfloor}} \\
            &\lesssim exp(-cN(T)) + T^3 e^{-2\log(1+\delta) q_{\lfloor \sqrt{k(T)}
\rfloor}^{d}}.
\end{align*}

From this bound, it is clear that  $P_d (N(T),T)$ goes to zero faster than any inverse power. One gets the same bound for  $P_g (N(T),T)$ because of the symmetry of the potential. Finally, one can conclude  that 
$$
\alpha_u^+ \leq \alpha
$$
with
$$
\alpha = \frac{1}{\gamma(V)} + \nu
$$
and  $\nu$ arbitrary small.

For the second part of the theorem, notice the constant $2$ appears from the choice of $\gamma(V)$ considering the worst coefficient in matrix $P_n$. But assuming there no $1$ in continued fraction development, one gets
$$
R_k \leq c_1^{k}
$$
and
$$
\gamma (V) \leq \limsup_k  - \frac{k\log c_1}{\log q_k }.
$$

\end{proof}


\section{A pathological counter-example}
The statements above holds if $D< + \infty$. In the case
$D=+\infty$, we exhibit in the next
statement a counter example. It is still an open question if $D=+\infty$ implies ballistic motion.

\begin{Th}\label{thexemple}
There exists an irrational number  $\omega$ with $D=+\infty$ such that
for any $V>20$.
$$
\alpha_u^+ = 1.
$$
\end{Th}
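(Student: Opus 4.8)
The plan is to build a specific $\omega$ with a lacunary continued fraction expansion and to prove ballistic transport by \emph{periodic approximation}, exploiting the fact that when $D=+\infty$ the potential agrees with a periodic one on an enormous window centered at the origin. Since for this model one always has $\alpha_u^+\le 1$, it suffices to prove $\alpha_u^+\ge 1$, i.e.\ that for every fixed $\gamma<1$ one has $S^+(\gamma)<\infty$. By the definition of $S^+$ this holds as soon as, along some sequence $T_k\to\infty$, the outside probability $P(T_k^\gamma-2,T_k)$ stays bounded below by a quantity that is not super-polynomially small in $T_k$.

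First I would construct $\omega=[0,a_1,a_2,\dots]$ inductively. Suppose $a_1,\dots,a_k$ are fixed; this fixes $q_k$, $\beta_k=p_k/q_k$ and the periodic operator $H_{\beta_k}$ of period $q_k$. The characteristic Sturmian word of $\omega$ obeys the substitution $s_{k+1}=s_k^{a_{k+1}}s_{k-1}$, so the potentials of $H_\omega$ and $H_{\beta_k}$ coincide on a window around the origin of size $L_k\gtrsim q_{k+1}$. I would then choose $a_{k+1}$ so large that $\log q_{k+1}>(k+1)^2$, which forces $D=\limsup_k (\log q_k)/k=+\infty$, and simultaneously so large that $L_k$ dwarfs the time scale introduced below. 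This adaptive choice is the heart of the matter: the more strongly $\omega$ mimics a rational, the longer the periodic dynamics remain visible.

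Next I would invoke ballistic transport for each \emph{fixed} periodic operator $H_{\beta_k}$. Its spectrum is purely absolutely continuous and, since a discrete Schr\"odinger operator has no flat Bloch bands, the associated asymptotic velocity operator $V_+$ satisfies $\langle\delta_1,V_+^2\delta_1\rangle=v_k^2>0$; indeed $i[H,X]\delta_1=i(\delta_0-\delta_2)\neq 0$, so $\delta_1$ is not dynamically trivial. Hence the time--averaged second moment of the $H_{\beta_k}$--evolution obeys $\sum_n n^2 a^{(k)}(n,T)\ge c\,v_k^2 T^2(1-o(1))$. Combining this with a finite propagation speed (Lieb-Robinson) bound $|\langle\delta_m,e^{-itH}\delta_n\rangle|\le C e^{-(|m-n|-v|t|)}$, with $v$ depending only on $V$, the time--averaged weight beyond $vT$ is exponentially small, so $\sum_n n^2 a^{(k)}(n,T)\le (T^\gamma)^2+P_{\beta_k}(T^\gamma,T)(vT)^2$. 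The two bounds give $P_{\beta_k}(T^\gamma-2,T)\ge v_k^2/(2v^2)=:c_k>0$ for all large $T$ and every fixed $\gamma<1$. Finally I would transfer this to $H_\omega$: because the two operators agree on $\{|n|\le L_k\}$, a Duhamel/telescoping estimate based on finite propagation speed makes $\|e^{-itH_\omega}\delta_1-e^{-itH_{\beta_k}}\delta_1\|$ exponentially small for $t\lesssim L_k/(2v)$, and the Laplace weight $e^{-2t/T}$ concentrates on $t\lesssim T\log(1/\varepsilon)$; choosing $T_k$ with $vT_k\log T_k\le L_k/2$ yields $P_\omega(T_k^\gamma-2,T_k)\ge c_k/2$. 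Taking $T_k$ as large as the window permits, which is possible precisely because $a_{k+1}$ was chosen huge after $c_k$ was already frozen, gives $\log(2/c_k)/\log T_k\to 0$, hence $-\log P_\omega(T_k^\gamma-2,T_k)/\log T_k\to 0$. Thus $S^+(\gamma)=0<\infty$ for every $\gamma<1$, so $\alpha_u^+\ge\gamma$ for all such $\gamma$, and therefore $\alpha_u^+=1$.

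The step I expect to be hardest is the quantitative periodic approximation: one must upgrade the static agreement of potentials on $\{|n|\le L_k\}$ to genuine closeness of the \emph{time--averaged} dynamical quantities, controlling both the finite-speed leakage at the edge of the window and the tail $t\gg T_k$ of the Laplace weight, uniformly enough that the fixed---and possibly tiny---ballistic constant $c_k$ costs only an $o(\log T_k)$ factor. The device that defeats this obstacle is the freedom to enlarge $a_{k+1}$, and hence $L_k$ and $T_k$, \emph{after} $H_{\beta_k}$ (and thus $c_k$ and $v_k$) have been fixed; the ballistic lower bound for the periodic approximants themselves is classical and enters only qualitatively, through the positivity $v_k>0$.
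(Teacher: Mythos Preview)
Your proposal is correct and follows the same essential strategy as the paper: construct $\omega$ inductively, at each stage freezing the periodic approximant $H_{\beta_k}$ (and hence its ballistic constants) \emph{before} choosing $a_{k+1}$ enormous, so that the Sturmian and periodic potentials agree on a window that dwarfs the relevant time scale. The paper's proof, following Last, tracks the second moment $\langle|X|^2\rangle_T$ and invokes as a black box a continuity lemma (if two bounded potentials agree on a sufficiently long window then their time-$T$ second moments differ by less than $\varepsilon$), obtaining $\langle|X|^2_{H_\omega}\rangle_{T_n} \gtrsim T_n^2/\log T_n$ and concluding via the moment growth exponent. You instead work with the outside probability $P(T^\gamma,T)$ directly and rebuild the approximation step from scratch via Lieb--Robinson/Duhamel estimates; this is more self-contained but also more laborious, and your reduction to $S^+(\gamma)=\liminf_T \log(1/P)/\log T<\infty$ along a subsequence is the correct reading of the definition. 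Both routes rest on the same two ingredients---ballistic motion for periodic operators and the freedom to postpone the choice of $a_{k+1}$---so the difference is one of packaging rather than of idea.
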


The proof,  made by induction, follows the lines of pathological example in \cite{Last}. The main idea is that, choosing an irrational number  close to rational numbers (with large values for the sequence $\{a_k\}_k$), potentials of $H_\beta$ and  $H_{\beta_n}$ coincide on large scale of time. Large enough to say that $H_\beta$ and  $H_{\beta_n}$ have the same dynamical behavior.
It is well known that periodic operator $H_{\beta_n}$ has ballistic motion.

We make now these ideas more precise and first prove the following lemma:

\begin{Lem}
Sturmian potentials of operator $H_\beta$ and $H_{\beta_n}$
have  the same first  $q_{n+1}$ values.
\end{Lem}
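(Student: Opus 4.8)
The plan is to compare the two Sturmian potentials directly via their defining formula $V(n) = (\lfloor (n+1)\beta \rfloor - \lfloor n\beta \rfloor)V$, and reduce the claim to showing that the two irrational numbers $\beta$ and $\beta_n = p_n/q_n$ produce identical potentials for the first $q_{n+1}$ sites. Since the common factor $V$ plays no role, it suffices to prove that $\lfloor (m+1)\beta \rfloor - \lfloor m\beta \rfloor = \lfloor (m+1)\beta_n \rfloor - \lfloor m\beta_n \rfloor$ for all indices $m$ in the relevant range. The natural route is to control $|m\beta - m\beta_n|$ and argue that it is too small for any of these floor-differences to disagree.

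First I would invoke the standard best-approximation property of continued fraction convergents, namely
$$
\left| \beta - \frac{p_n}{q_n} \right| < \frac{1}{q_n q_{n+1}},
$$
so that for any integer $m$ one has $|m\beta - m\beta_n| < m/(q_n q_{n+1})$. For $m \leq q_{n+1}$ this bound is at most $1/q_n$, which is small. The key step is then to translate smallness of $|m\beta - m\beta_n|$ into equality of the floor-differences $\lfloor (m+1)\beta \rfloor - \lfloor m\beta \rfloor$. I would argue that the difference between the two potentials at site $m$ can only occur if one of $m\beta$ or $(m+1)\beta$ and its $\beta_n$-counterpart straddle an integer; I would bound how often the perturbation of size $O(1/q_n)$ can push a point across an integer, using the three-distance (Steinhaus) structure of the sequence $\{m\beta\}$ on the circle.

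The cleanest way to organize this, which I would prefer, is to use the exact combinatorial description of Sturmian sequences by convergents: the Sturmian word generated by $\beta$ agrees with the periodic word generated by $p_n/q_n$ on a prefix whose length is governed by $q_{n+1}$, because the two words share the same hierarchical block structure built from the convergents $q_k$ with $k \le n$. Concretely, the Sturmian sequence for $\beta$ is obtained as a limit of concatenations of blocks $s_k$ satisfying a recursion parallel to $q_{k+1} = a_{k+1}q_k + q_{k-1}$, and truncating the continued fraction at level $n$ leaves this block structure unchanged up to length $q_{n+1}$. I would make this explicit by writing both potentials in terms of these blocks and observing that they are identical up to the first place where the $(n+1)$-st partial quotient first influences a symbol, which happens at position $q_{n+1}$.

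The main obstacle will be the boundary bookkeeping: pinning down exactly why the agreement extends to precisely $q_{n+1}$ values rather than, say, $q_{n+1}-1$ or $q_n$, and handling the floor-function edge cases where $m\beta$ lands very near an integer. This requires care because the convergents $p_n/q_n$ alternate in sign of the error $\beta - p_n/q_n$, so whether a point is pushed up or down across an integer depends on the parity of $n$; I would treat the even and odd $n$ cases together by using the two-sided bound $|\beta - p_n/q_n| < 1/(q_n q_{n+1})$ and checking that for $m \le q_{n+1}$ the displacement never reaches the nearest integer to either $m\beta$ or $(m+1)\beta$ unless that integer is itself crossed at the transition point $m = q_{n+1}$.
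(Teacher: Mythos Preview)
Your preferred route---the combinatorial description via the block recursion $W_{k+1}=W_k^{a_{k+1}}W_{k-1}$---is exactly what the paper does. The paper's proof is short: since $\beta$ and $\beta_n$ share the first $n$ partial quotients, the words $W_0,\dots,W_n$ coincide for both; for $\beta_n$ the infinite word is the periodic $W_n^\infty$; and because $W_n=W_{n-1}^{a_n}W_{n-2}$ begins with $W_{n-1}$, the periodic word $W_n^\infty$ automatically begins with $W_n^{a_{n+1}}W_{n-1}=W_{n+1}$, which has length $q_{n+1}$. Your plan says ``truncating the continued fraction at level $n$ leaves this block structure unchanged up to length $q_{n+1}$'' but does not isolate this last observation, which is the actual reason the agreement extends beyond the obvious $q_n$ symbols to $q_{n+1}$: the next block $W_{n-1}$ is already sitting at the front of the $(a_{n+1}+1)$-th copy of $W_n$. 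Making that explicit is the whole proof.

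Your first approach via the Diophantine bound $|\beta-\beta_n|<1/(q_nq_{n+1})$ is a genuine alternative, but, as you anticipate, the boundary cases are delicate: at multiples of $q_n$ the number $m\beta_n$ is an integer while $m\beta$ is not, so $\lfloor m\beta\rfloor$ and $\lfloor m\beta_n\rfloor$ can differ, and one must check that the \emph{increments} still agree. This can be done (e.g.\ via the three-distance theorem or a careful parity-of-$n$ case split), but it is noticeably heavier than the two-line combinatorial argument the paper uses, and it buys nothing extra here.
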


\begin{proof}
To prove this, we  recall the iterative construction of sturmian word that coincide with our potential. For details and proof, see e.g. \cite{Lothaire}.
Denote $W_0=0$ et $W_1=0^{a_1 -1} V$ and define the sequence of sturmian  words by
$$
W_{k+1} = W_k^{a_{k+1}}W_{k-1},\hspace{5mm} k \geq 1.
$$
Each word $W_k$ has length  $q_k$.

As $H_\beta$ and $H_{\beta_n}$ have the same first $n$ terms of continued fraction expansion, words $W_0, W_1,...W_n$ are the same for $H_\beta$ and
$H_{\beta_n}$.

For $H_{\beta_n}$, the limit word $W_\infty$ is periodic with period $q_n$ and repeat
endless the word $W_n$.
As $W_n = W_{n-1}^{a_n}W_{n-2}$, one has
$$
W_n^\infty = W_n^{a_{n+1}}  W_{n-1}^{a_n}W_{n-2} W_n^\infty.
$$
This shows that the potential  $H_{\omega_n}$ begins with the word
 $W_n^{a_{n+1}}  W_{n-1}$ which is the word  $W_{n+1}$ for $H_\omega$.
  As $W_{n+1}$ is $q_{n+1}$ long, this ends the proof.
\end{proof}

We need another lemma, one can find in \cite{Last}. It states that two operators have close dynamic (on some scale of time $T$) if their potentials are close enough. We make this idea more precise by recalling this lemma:

\begin{Lem}\label{lemlast}
Let $H_1 = \Delta + V_1  $ and $H_2 = \Delta +V_2$ acting on $l^2(\Z),$ and such that  $|V_1(k)|, |V_2(k)| < C$
for all  $k \in  \Z$ and some constant  $C$. Let $T > 0$ and $\varepsilon > 0$ be fixed constant then if it exists $L(T,\varepsilon), \delta > 0 $ such that $|V_1(k) - V_2(k)| < \delta$ for all $|k| < L$, then
$$
| \langle |X|^2_{H_1} \rangle_T -  \langle |X|^2_{H_2} \rangle_T | < \varepsilon.
$$
\end{Lem}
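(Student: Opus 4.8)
The plan is to compare the two unitary evolutions by the Duhamel formula and to exploit two features of the discrete Schr\"odinger dynamics: the ballistic upper bound on the spreading (so that $\langle |X|^2_{H_i}\rangle_T \lesssim T^2$) and the quasi-finite speed of propagation, i.e. the super-exponential spatial decay of the matrix elements $\langle \delta_n, e^{-isH}\delta_m\rangle$ when $|n-m|$ is large compared with $|s|$. The exponential weight $e^{-2t/T}$ in the definition of $\langle |X|^2_H\rangle_T$ makes $t\sim T$ the only relevant time scale, so a wavepacket launched from $\delta_1$ essentially never reaches the sites $|k|\ge L$ once $L$ is chosen large compared with $T$; on the complementary region $|k|<L$ the two potentials differ by at most $\delta$, which we send to $0$ at the very end.

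First I would reduce the statement to an estimate on $\||X|(\psi_1(t)-\psi_2(t))\|$, where $\psi_i(t)=e^{-itH_i}\delta_1$. Writing $\langle|X|^2_{H_i}\rangle_T=\frac{2}{T}\int_0^\infty e^{-2t/T}\||X|\psi_i(t)\|^2\,dt$ and using $\big|\,\|a\|^2-\|b\|^2\big|\le \|a-b\|\,(\|a\|+\|b\|)$ with $a=|X|\psi_1(t)$ and $b=|X|\psi_2(t)$, the difference of the averaged moments is bounded by $\frac{2}{T}\int_0^\infty e^{-2t/T}\||X|(\psi_1-\psi_2)(t)\|\,\big(\||X|\psi_1(t)\|+\||X|\psi_2(t)\|\big)\,dt$. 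The factors $\||X|\psi_i(t)\|$ grow at most linearly in $t$ (this already follows from the commutator bound below, since $\|H_i\|\le 2+C$), so it only remains to show that $\||X|(\psi_1-\psi_2)(t)\|$ is small on the effective range $t\lesssim T$.

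Next I would apply Duhamel, $\psi_1(t)-\psi_2(t)=-i\int_0^t e^{-i(t-s)H_1}W\,e^{-isH_2}\delta_1\,ds$ with $W=V_1-V_2$, and move the weight $|X|$ through the evolution $e^{-irH_1}$ using the commutator bound $\|[H_1,|X|]\|\le 2$ (the potential is diagonal, hence commutes with $|X|$, while the hopping part shifts $|X|$ by $\pm 1$). Since $e^{irH_1}$ is unitary, this gives $\||X|e^{-irH_1}\phi\|\le \||X|\phi\|+2|r|\,\|\phi\|$ for every $\phi$. I then split $W=W_{\mathrm{in}}+W_{\mathrm{out}}$, with $W_{\mathrm{in}}$ supported on $|k|<L$ (where $\|W_{\mathrm{in}}\|_\infty<\delta$) and $W_{\mathrm{out}}$ supported on $|k|\ge L$ (where $\|W_{\mathrm{out}}\|_\infty<2C$). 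The inside term is controlled by $\|W_{\mathrm{in}}e^{-isH_2}\delta_1\|\le\delta$ and $\||X|W_{\mathrm{in}}e^{-isH_2}\delta_1\|\le \delta L$, so after the $s$- and $t$-integrations it contributes at most a constant times $\delta\cdot\mathrm{poly}(L,T)$. For the outside term one uses that $e^{-isH_2}\delta_1$ has super-exponentially small amplitude and second moment on $\{|k|\ge L\}$ for $s\lesssim T$: expanding and using that $H_2$ is tridiagonal gives $|\langle\delta_n,e^{-isH_2}\delta_m\rangle|\le \frac{(|s|(2+C))^{|n-m|}}{|n-m|!}\,e^{|s|(2+C)}$, whence $\|\mathbf{1}_{\{|k|\ge L\}}e^{-isH_2}\delta_1\|$ and $\||X|\mathbf{1}_{\{|k|\ge L\}}e^{-isH_2}\delta_1\|$ are negligible once $L$ exceeds a fixed multiple of $T$.

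Finally I would fix the constants in the order dictated by the estimate: given $T$ and $\varepsilon$, first choose $L=L(T,\varepsilon)$ large enough that the outside contribution is $<\varepsilon/2$ (this uses only the ballistic and finite-propagation bounds, not $\delta$), and then choose $\delta=\delta(T,\varepsilon,L)$ small enough that the inside contribution, being $O(\delta\cdot\mathrm{poly}(L,T))$, is $<\varepsilon/2$. The main obstacle is the unbounded weight $|X|$: one cannot simply bound $\psi_1-\psi_2$ in $\ell^2$, but must propagate the $|X|^2$ weight through the perturbation and, in the outside region, balance the factor $|n|^2$ against the super-exponential decay of the evolution. Controlling this balance, while keeping the quantifiers in the correct order ($L$ chosen before $\delta$), is the heart of the argument; the rest is routine Duhamel bookkeeping.
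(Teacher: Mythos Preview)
The paper does not prove this lemma at all; it merely quotes it from \cite{Last}. Your sketch is in fact the standard argument and is essentially the one Last gives: Duhamel's formula for $\psi_1-\psi_2$, the commutator bound $\|[H,|X|]\|\le 2$ (which simultaneously yields the ballistic estimate $\||X|\psi_i(t)\|\lesssim t$ and lets you push the weight $|X|$ through $e^{-irH_1}$), and the factorial decay $|\langle\delta_n,e^{-isH}\delta_m\rangle|\le \frac{(|s|\|H\|)^{|n-m|}}{|n-m|!}e^{|s|\|H\|}$ coming from the tridiagonal structure. The splitting $W=W_{\mathrm{in}}+W_{\mathrm{out}}$ and the order in which you fix $L$ and then $\delta$ are exactly right.

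One small point you gloss over: since the time average runs over $(0,\infty)$, you cannot literally restrict to $t\lesssim T$; you must first choose a cutoff $T_0=T_0(T,\varepsilon)$ and bound the tail $t>T_0$ crudely by $\||X|(\psi_1-\psi_2)(t)\|\le\||X|\psi_1(t)\|+\||X|\psi_2(t)\|\lesssim t$ together with the exponential weight $e^{-2t/T}$, and only then pick $L$ large relative to $T_0$ (not $T$). Your phrase ``the effective range $t\lesssim T$'' signals that you see this, but in a written proof the three choices $T_0\to L\to\delta$ should be made explicit. With that caveat the argument is complete.
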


We get back to the construction.

\begin{proof}[Proof of theorem \ref{thexemple}]
As $H_{\omega_n}$ is a periodic potential  operator, one has
$$
\langle |X|^2_{H_{\omega_n}} \rangle_T > C_n T^2.
$$
choose  $T_n$ big enough such that
$$
C_n > \frac{1}{\log T_n}.
$$

One can then choose $a_{n+1}$ such that  $L(T_n,1) \leq q_{n+1}$.
Then Lemma \ref{lemlast} implies 
\begin{equation}
\langle |X|^2_{H_\omega} \rangle_{T_n} > \frac{T_n^2}{\log T_n} -1.
\end{equation}
Inductively, we have a sequence $T_n$ going to infinity and an irrational number $\omega$ with
$$
\langle |X|^2 \rangle_{T_n} > \frac{ T_n^2}{\log T_n}-1 > T_n^{2-\varepsilon}, \forall \varepsilon >0
$$
which yields to
$$
\alpha_u^+  \geq \beta^-_{\delta_1} (2) > 1-\varepsilon, \forall \varepsilon >0.
$$
\end{proof}

\section{Lower bound for the box counting dimension of the spectrum}

We give now a lower  bound of the fractal box counting  dimension of the spectrum of operator $H_\beta$.
We recall now the defintion. If one denotes $N(\varepsilon)$ the number of balls of diameter at most $\varepsilon$ one need to cover $\sigma$, then the upper box counting dimension is defined by
$$
\dim_B^+ = \limsup_{\varepsilon \to 0}\frac{\log N(\varepsilon)}{\log \varepsilon}
$$

 The spectrum is approached by the band spectrum of periodic  $H_{\beta_n}$. Moreover, in \cite{r1,chin1}, we have precise informations of the number of bands and their length. It allows  to give a lower bound of minimal number of set of some decreasing scale needed to cover the spectrum and then to give a lower bound of box dimension of the limit set.  The first idea to cover the spectrum can be  to take into account all the bands and take  as a scale the smaller length, but this is a bad idea because this minimal length  decreases faster than the number of intervals grows. The second idea can be to count the number of bands that have the maximal length, in terms of inverse power of $V$. This yields to a better lower bound for the box dimension of the spectrum for almost every irrational numbers.

Fixing the irrational number, one can improve this method, by counting precisely the number of band that have a particular length. It has been made for Fibonacci number in \cite{tche2} where the full fractal spectrum has been investigated. The length of a band is depending of its history, in that case, the times it gets by a type I band. Hence, one obtains this way all the contribution at any scale to the box dimension. It is shown their result is optimal with $V$ increasing and one has for $\beta = [0,1,1, \dots]$
$$
\dim_B (\sigma (H_\beta))\approx \frac{\log (1 +\sqrt{2})}{\log V}.
$$

An other example, simpler than golden  mean is silver ratio. Fix $\beta=[0,2,2,\dots]$, then all the bands have the same length up to a constant independent of $V$. Namely, all bands at level $k$ have length $c_k V^{-k}$, where $c_k$ is a constant depending of  history of the band but not of $V$.

This implies that one has:
$$
\dim_B (\sigma (H_\beta)) \geq -\liminf_k \frac{\log q_k}{\log c_k V^{-k}} \approx \frac{\log (1 +\sqrt{2})}{\log V}.
$$

It is easy to show with same kind of argument the other side inequality and hence we obtain the same estimation for this case
$$
\dim_B (\sigma (H_\beta))\approx \frac{\log (1 +\sqrt{2})}{\log V}.
$$
It is quite astonishing that both golden mean and silver ratio yield  the same fractal dimension estimate.

Going back to the general case, we will apply the same method used for silver mean, that is count the number of band at level $k$ that have length equal to $c_k V^{-k}$. We obtain:

\begin{Th}\label{thdimboite}
Denote  $C_k= \frac{3}{k} \sum_{j=1}^k \log (a_j +2)$. We have for
any irrational number  $\beta$ verifying
$C = \limsup C_k <+\infty$ and $V>20$:
\begin{equation}\label{borneinf}
dim_B^+ (\sigma) \geq \frac{1}{2} \frac{\log 2}{C + \log (V+5)}
\end{equation}
where $\sigma$ is the spectrum of $H_\beta$.
\end{Th}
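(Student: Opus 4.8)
The plan is to recover $\sigma$ as the nested limit of the band spectra $\sigma_{k,0}$ and to lower bound, along a sequence of scales $\varepsilon_k\to0$, the covering number $N(\varepsilon_k)$ by producing many disjoint bands of length at least $\varepsilon_k$, each meeting $\sigma$. I fix the scale
$$
\varepsilon_k=4\prod_{j=1}^{k}c_2\,(a_j+2)^{-3}=4\,(V+5)^{-k}e^{-kC_k},
$$
which is $4$ times the minimal possible value of $L_\tau(Q)$, namely the value attained by indices built only from the four baseline transitions $II\to I$, $II\to III$, $III\to I$, $III\to III$; thus $\log(1/\varepsilon_k)=k\bigl(\log(V+5)+C_k\bigr)-\log4$.

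Call a level-$k$ band \emph{long} if its length is at least $\varepsilon_k$. By Theorem \ref{thchinois} this holds whenever every transition of its index $\tau=i_0\cdots i_k$ contributes a factor $\geq c_2(a_j+2)^{-3}$. The baseline transitions achieve this with equality, and the remaining transition $I\to II$ contributes $c_2^{a_j-1}$, which exceeds the baseline factor exactly when $a_j$ is small --- certainly for $a_j\le2$, where it even lengthens the band. Hence every band whose index uses $I\to II$ only at levels with $a_j\le2$ is long.

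The heart of the argument is to count long bands through Raymond's Lemma \ref{raymond}. Writing $g_I,g_{II},g_{III}$ for the numbers of long bands of each type, the lemma yields
$$
g_I(k+1)=(a_{k+1}+1)g_{II}(k)+a_{k+1}g_{III}(k),\qquad g_{III}(k+1)=a_{k+1}g_{II}(k)+(a_{k+1}-1)g_{III}(k),
$$
$$
g_{II}(k+1)=\mathbf 1_{\{a_{k+1}\le2\}}\,g_I(k),
$$
since a long type $II$ or $III$ band spawns its type $I$ and type $III$ descendants through length-preserving baseline transitions, whereas a type $I$ band can continue only via $I\to II$, which keeps the band long solely for small $a_{k+1}$. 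Setting $h(k)=g_{II}(k)+g_{III}(k)$, the key combinatorial claim is that $h$ at least doubles every two levels: if $a_{k+1}\ge3$ then $h(k+1)\ge2h(k)$ in one step, while if $a_{k+1}=a_{k+2}=1$ one computes $h(k+2)-2h(k)=g_I(k)-g_{III}(k)=h(k-1)\ge0$, the intermediate cases ($a=2$) being analogous. Consequently $h(k)\ge c\,2^{k/2}$. I expect this uniform doubling lemma --- controlling the slow, small-$a_j$ steps where growth must be recovered over two levels rather than one --- to be the main obstacle.

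To conclude, each long band is a genuine band of a periodic approximant and, by the nesting of the band-counting Proposition together with Lemma \ref{raymond}, contains points of $\sigma$; choose one spectral point per long band. Any two such points lying in long bands separated by at least one intervening long band are more than $\varepsilon_k$ apart, so a ball of diameter $\varepsilon_k$ meets at most two of them, giving $N(\varepsilon_k)\ge h(k)/2\ge(c/2)2^{k/2}$. Therefore
$$
\dim_B^+(\sigma)\ge\limsup_k\frac{\log N(\varepsilon_k)}{\log(1/\varepsilon_k)}\ge\limsup_k\frac{\tfrac12\log2+O(1/k)}{\log(V+5)+C_k}=\frac{\tfrac12\log2}{\log(V+5)+\liminf_k C_k}\ge\frac12\frac{\log2}{C+\log(V+5)},
$$
the final step using $\liminf_k C_k\le\limsup_k C_k=C$.
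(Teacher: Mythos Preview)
Your proof is correct and follows essentially the same route as the paper: the same scale $\varepsilon_k$, the same recursion for long bands of each type (the paper writes $n_{k,I},n_{k,II},n_{k,III}$ with the identical cutoff $\mathbf 1_{\{a_{k+1}\le2\}}$ for the $I\to II$ transition), the same key inequality $n_{k,II}+n_{k,III}\ge 2^{\lfloor k/2\rfloor}$ obtained by doubling every two steps, and the same halving to pass to a covering bound. The paper's supporting lemma carries out explicitly the four-case analysis on $(a_k,a_{k-1})$ (each $\le2$ or $>2$) that you only sketch, confirming your guess that this uniform two-step doubling is where the work lies.
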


\begin{req}
As in theorem \ref{thprincipal}, $C$ finite is a full condition equivalent to $D$ finite.
\end{req}

The following lemma give precise statement of the counting idea.
\begin{Lem}
Denote $n_{k,I}, n_{k,II}$ and  $n_{k,III}$ the number  of bands of type respectively I, II et  III in respectively $\sigma_{k,1}, \sigma_{k+1,0}, \sigma_{k+1,0}$ and with a length  greater than
$\varepsilon_k =4 \Pi_{j=1}^k (V+5)^{-1} (a_j +2)^{-3}$.

For all  $k$, we have the following induction relation:
$$
n_{k+1,I} = (a_{k+1} +1)n_{k,II} + a_{k+1} n_{k,III},
$$
$$
n_{k+1,II} = 1_{\{a_{k+1} \leq 2\}} n_{k,I},
$$
$$
n_{k+1,III} = a_{k+1} n_{k,II} + (a_{k+1}-1) n_{k,III}.
$$
With initials conditions $n_{0,I}=1, n_{0,II}=0,  n_{0,III}=1$.

Moreover  this three sequences verify the following properties:

$$
n_{k,II} \neq 0 \bigvee  n_{k,III} \neq 0
$$
$$
n_{k,I} \neq 0
$$
$$
n_{k,I} > n_{k,III}
$$
and
$$
n_{k,II} + n_{k,III} > 2^{\lfloor \frac{k}{2} \rfloor}
$$

\end{Lem}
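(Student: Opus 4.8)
The plan is to read the three recurrences off the combinatorial band–splitting structure of Lemma \ref{raymond} and then to filter the resulting children through the metric estimate of Theorem \ref{thchinois}. First I would set up the dictionary between band types and the entries of the matrices $Q_n$: indexing rows and columns by $I,II,III$, the nonzero entries of $Q_n$ are exactly the transitions permitted by Lemma \ref{raymond} ($I\to II$, $II\to I$, $II\to III$, $III\to I$, $III\to III$), while the four vanishing entries $t_{I,I},t_{I,III},t_{II,II},t_{III,II}$ correspond precisely to the forbidden transitions. Lemma \ref{raymond} then supplies the raw multiplicities: a type $II$ band at level $k$ produces $a_{k+1}+1$ type $I$ and $a_{k+1}$ type $III$ children, a type $III$ band produces $a_{k+1}$ type $I$ and $a_{k+1}-1$ type $III$ children, and a type $I$ gap produces one type $II$ band, which already matches the combinatorial skeleton of the asserted relations together with the initial data $n_{0,I}=1$, $n_{0,II}=0$, $n_{0,III}=1$.

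Second, I would track which children survive the length threshold $\varepsilon_k = 4\prod_{j=1}^k (V+5)^{-1}(a_j+2)^{-3}$. Since $|B|\geq 4L_\tau(Q)$ by Theorem \ref{thchinois}, a band survives as soon as its $Q$–product stays above $\prod_j c_2(a_j+2)^{-3}$, i.e. as soon as every transition factor dominates the per-level budget $c_2(a_j+2)^{-3}$. The four transitions landing on $I$ or $III$ carry the factor $c_2(a_j+2)^{-3}$, which \emph{equals} the budget, so they preserve the threshold automatically and contribute their full multiplicity; this yields the relations for $n_{k+1,I}$ and $n_{k+1,III}$. The transition $I\to II$ carries instead the factor $c_2^{a_{k+1}-1}$, and comparing it with the budget reduces to $(a_{k+1}+2)^3\geq (V+5)^{a_{k+1}-2}$, which holds for $a_{k+1}\leq 2$ and fails for $a_{k+1}\geq 3$ in the large-coupling regime; retaining only the surviving case produces the factor $1_{\{a_{k+1}\leq 2\}}$ in the relation for $n_{k+1,II}$ (and gives a valid lower count for every $V>20$). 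I expect this threshold bookkeeping --- matching the zero pattern of $Q_n$ to Lemma \ref{raymond} and pinning down the indicator --- to be the main obstacle, since it is the one place where the combinatorics and the metric estimate must be made to agree.

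Third, for the qualitative properties I would run one joint induction on $k$, driven by the identity
$$
n_{k+1,I}-n_{k+1,III}=\big[(a_{k+1}+1)n_{k,II}+a_{k+1}n_{k,III}\big]-\big[a_{k+1}n_{k,II}+(a_{k+1}-1)n_{k,III}\big]=n_{k,II}+n_{k,III}.
$$
This gives $n_{k+1,I}>n_{k+1,III}$ as soon as $n_{k,II}+n_{k,III}\geq 1$, so the third property follows from the first. For the second, $n_{k+1,I}=(a_{k+1}+1)n_{k,II}+a_{k+1}n_{k,III}\geq n_{k,II}+n_{k,III}\geq 1$. For the first, split on $a_{k+1}$: if $a_{k+1}=1$ then $n_{k+1,II}=n_{k,I}\geq 1$, and if $a_{k+1}\geq 2$ then $n_{k+1,III}\geq (a_{k+1}-1)\max(n_{k,II},n_{k,III})\geq 1$. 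All three close up, with the base case $k=1$ ($n_{1,I}=a_1$, $n_{1,II}=1_{\{a_1\leq 2\}}$, $n_{1,III}=a_1-1$) checked by hand.

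Finally, for the doubling bound I would set $S_k=n_{k,II}+n_{k,III}$ and prove $S_{k+2}>2S_k$ for $k\geq 1$. From the relations, $S_{k+1}>S_k$ for every $k\geq 1$, using $n_{k,I}>n_{k,III}$ when $a_{k+1}=1$ and $S_{k+1}\geq 2n_{k,II}+n_{k,III}$ when $a_{k+1}\geq 2$. Moreover, when $a_{k+1}\geq 2$ one has directly $S_{k+1}\geq 2S_k$ (for $a_{k+1}=2$ substitute $n_{k,I}>n_{k,III}$; for $a_{k+1}\geq 3$ use $a_{k+1}n_{k,II}+(a_{k+1}-1)n_{k,III}\geq 2S_k$), so $S_{k+2}>S_{k+1}\geq 2S_k$. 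The remaining case $a_{k+1}=1$ gives $n_{k+1,I}=2n_{k,II}+n_{k,III}$, $n_{k+1,II}=n_{k,I}$, $n_{k+1,III}=n_{k,II}$; feeding this into the next step and again invoking $n_{k,I}>n_{k,III}$ yields $S_{k+2}>2S_k$ for every admissible $a_{k+2}$. Iterating $S_{k+2}>2S_k$ from $S_0=1$, $S_1\geq 1$ then gives $S_k>2^{\lfloor k/2\rfloor}$ for all large $k$ (the first few indices, where the strict gain has not yet accumulated, being verified directly), which is exactly the stated lower bound.
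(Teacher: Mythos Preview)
Your proposal is correct and follows essentially the same route as the paper. The derivation of the recurrences from Lemma~\ref{raymond} together with the length estimates of Theorem~\ref{thchinois}, the identity $n_{k+1,I}-n_{k+1,III}=n_{k,II}+n_{k,III}$ for the third property, and the joint induction for the first two properties are exactly what the paper does (the paper compresses the first step to ``obvious with (\ref{thchinois})'', so your explicit bookkeeping of the $Q_n$-entries against the per-level budget $c_2(a_j+2)^{-3}$ is in fact more detailed than the original).

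The only genuine difference is in the organization of the doubling bound. The paper expands $n_{k,II}$ and $n_{k,III}$ two levels back and then runs a four-way case split on $(a_k,a_{k-1})\in\{\leq 2,>2\}^2$ to obtain $S_k\geq 2S_{k-2}$ directly. You instead first establish the one-step monotonicity $S_{k+1}>S_k$, then observe that $a_{k+1}\geq 2$ already gives $S_{k+1}\geq 2S_k$ (so $S_{k+2}>2S_k$ follows from monotonicity), leaving only the case $a_{k+1}=1$ to be unfolded one further step. Your organization is a bit leaner, while the paper's has the advantage of never needing the auxiliary monotonicity; the content is the same. One small caveat: your line ``$S_{k+1}\geq 2n_{k,II}+n_{k,III}$ when $a_{k+1}\geq 2$'' does not by itself give strict inequality over $S_k$ when $n_{k,II}=0$; you should either split that subcase (as you implicitly do later) or appeal to $n_{k,I}\geq 1$ for $a_{k+1}=2$ and to $(a_{k+1}-1)\geq 2$ for $a_{k+1}\geq 3$.
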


\begin{proof}
The induction relation is obvious with (\ref{thchinois}).

The two first properties are made by induction.
Initial conditions give level $0$.
 Assume it is true at level $n$, then  as $a_{k+1} > 0$,  $n_{k,II} \neq 0 \bigvee  n_{k,III} \neq
0$, implies   $n_{k+1,I} \neq 0$.
For the second part, if  $a_{k+1} \leq 2$ then $n_{k+1,II} \neq 0$, else $a_{k+1}
> 2$ implies $n_{k+1,III} \neq 0$.

To prove
$$
n_{k,I} > n_{k,III}
$$
it suffices to see that
$$
n_{k,I} =  n_{k,III} + n_{k-1,II} + n_{k-1,III}
$$
.

For the last property, it suffices to show that
$$
n_{k,II} + n_{k,III} \geq 2 (n_{k-2,II} + n_{k-2,III})
$$

Using induction relation, we get
$$
n_{k,II} = [(a_{k-1} +1) n_{k-2,II} + a_{k-1}n_{k-2,III}] 1_{\{a_k \leq 2\}}
$$
$$
n_{k,III} = (a_k-1) (a_{k-1} n_{k-2,II} + (a_{k-1}-1)n_{k-2,III}) + a_k
n_{k-2,I} 1_{\{a_{k-1} \leq 2\}}
$$
We distinguish 4 cases depending on the values of $a_k$ and
$a_{k-1}$.

$\bullet$
If $a_k > 2$ and $a_{k-1} >2$, then we simply get
\begin{align*}
n_{k,II} + n_{k,III} &= (a_k-1) (a_{k-1} n_{k-2,II} +
(a_{k-1}-1)n_{k-2,III}) \\
&\geq (a_k-1)(a_{k-1}-1)(n_{k-2,II} + n_{k-2,III}) \\
&\geq 4 (n_{k-2,II} + n_{k-2,III}).
\end{align*}

$\bullet$
If $a_k \leq 2$ and $a_{k-1} > 2$, then one has
\begin{align*}
n_{k,II} + n_{k,III} &= (a_k-1) (a_{k-1} n_{k-2,II} + (a_{k-1}-1)n_{k-2,III}) \\
 & \hspace{5mm}+ (a_{k-1} +1) n_{k-2,II} + a_{k-1}n_{k-2,III} \\
&\geq a_k a_{k-1} (n_{k-2,II} + n_{k-2,III}) \\
&\geq 3 (n_{k-2,II} + n_{k-2,III})
\end{align*}

$\bullet$
If $a_k > 2$ and $a_{k-1} \leq 2$, then one has
\begin{align*}
n_{k,II} + n_{k,III} &= (a_k-1) (a_{k-1} n_{k-2,II} + (a_{k-1}-1)n_{k-2,III}) + a_k
n_{k-2,I} \\
&\geq (a_k-1) (a_{k-1} n_{k-2,II} + (a_{k-1}-1)n_{k-2,III}) + a_k
n_{k-2,III} \\
&\geq (a_k-1)a_{k-1} (n_{k-2,II} + n_{k-2,III}) \\
&\geq 2 (n_{k-2,II} + n_{k-2,III})
\end{align*}

$\bullet$
If $a_k \leq 2$ and $a_{k-1} \leq 2$, then one gets
\begin{multline*}
n_{k,II} + n_{k,III} = (a_k-1) (a_{k-1} n_{k-2,II} + (a_{k-1}-1)n_{k-2,III}) +\\ a_k
n_{k-2,I} + (a_{k-1} +1) n_{k-2,II} + a_{k-1}n_{k-2,III}
\end{multline*}
And one obtains
\begin{align*}
n_{k,II} + n_{k,III} &\geq ((a_k-1)a_{k-1} + a_{k-1} +1) n_{k-2,II} \\
& \hspace{5mm}+ 
((a_{k}-1)(a_{k-1}-1) + (a_{k-1} + a_{k}) n_{k-2,III} \\
&\geq (a_k a_{k-1} +1)n_{k-2,II} + (a_{k-1} + a_{k})n_{k-2,III} \\
&\geq 2 (n_{k-2,II} + n_{k-2,III}).
\end{align*}
\end{proof}

\begin{proof}[Proof of theorem \ref{thdimboite}]
With previous lemma, we find a bound for $n_{k,II} +
n_{k,III}$, that is the number of bands of length at least $\varepsilon_k$. To make sure we have a disjoint cover we consider only half of the bands. Each band is then separeted by another band we does not count.
Then by definition of box dimension, we have
$$
\dim_B^+ (\sigma) \geq \liminf_k \frac{\log 1/2(n_{k,II} +
n_{k,III})}{-\log \varepsilon_k}.
$$
and the stated result.

\end{proof}

\begin{req}
The former  bound for box dimension   provided in \cite{chin1} was
$$
dim_B^+ (\sigma) \geq dim_{H} (\sigma) \geq \max \left\{ \frac{\log 2 }{10 \log 2 -
3 \log t_2}, \frac{\log M - \log 3}{\log M - \log t_2/3} \right\}.
$$
where $M = \liminf_{k \to \infty} (a_1a_2\dots a_k)^\frac{1}{k}$ and
$t_2=\frac{1}{4(V+8)}$.

For almost all irrational numbers, that is  with $M$ equal to the Khintchin constant
(2.685...), our bound is better than  above and for any $V>20$. On the other
hand, for all fixed $V$, one has no improvement with some specific numbers. Fixing $\beta=[0,c,c,\dots] $, the
bound above goes
to 1 and (\ref{borneinf}) to 0 as $c$ goes to infinity. 
\end{req}

A lower bound for box dimension can be relevant to obtain a lower bound
for dynamic exponent $\alpha_u$.

\begin{definition}
A irrational number is said to be a bounded density irrational number if it fulfills the following condition
$$
\limsup_n \frac{1}{n} \sum_{i=1}^n a_i < + \infty. 
$$
\end{definition}

\begin{Th}
For any bounded density irrational number, we have
$$
\alpha_u^- \geq\frac{1}{2} \frac{\log 2}{C + \log (V+5)}
$$
with $C=\limsup\frac{3}{k} \sum_{j=1}^k \log (a_j +2)$.
\end{Th}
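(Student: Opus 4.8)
The plan is to connect the lower box counting dimension of the spectrum, for which we have just established the bound in Theorem~\ref{thdimboite}, to the dynamical exponent $\alpha_u^-$ through a dynamical lower bound of the type developed in \cite{tche1}. The method in \cite{tche1} yields lower bounds on the spreading of the wavepacket in terms of the continuity properties of the spectral measure together with polynomial control on the transfer matrix norms. The essential observation is that for a bounded density irrational number, it is shown in \cite{ioc1} that the transfer matrix norms $\|M_k(E)\|$ are polynomially bounded in $q_k$ uniformly for $E$ in the spectrum. This is precisely the hypothesis required to invoke the machinery of \cite{tche1}, and it is the reason the statement is restricted to the bounded density class rather than to all $\beta$ with $C<+\infty$.

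First I would verify that the bounded density condition $\limsup_n \frac1n\sum_{i=1}^n a_i<+\infty$ indeed implies both $C=\limsup_k \frac3k\sum_{j=1}^k\log(a_j+2)<+\infty$ (so that the bound of Theorem~\ref{thdimboite} applies and the right-hand side is a genuine positive number) and the polynomial transfer-matrix bound from \cite{ioc1}. The first implication is immediate from concavity of $\log$, since $\frac3k\sum_j\log(a_j+2)\le 3\log\bigl(2+\frac1k\sum_j a_j\bigr)$ by Jensen, which is bounded under bounded density. Next I would quote the theorem of \cite{tche1} in the form that relates $\alpha_u^-$ from below to the box counting dimension of the spectrum: under polynomially bounded transfer matrices, the fastest part of the wavepacket propagates at a rate at least the lower box dimension $\dim_B^+(\sigma)$. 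Concretely one uses that the spectral measure gives non-negligible weight to a covering of the spectrum at scale $\varepsilon_k$, and the number of bands surviving at that scale, already counted in the preceding lemma as $n_{k,II}+n_{k,III}>2^{\lfloor k/2\rfloor}$, forces the transport exponent down from above by no more than the dimension.

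The final step is purely to substitute: applying the dynamical lower bound of \cite{tche1} and then inserting the explicit dimension estimate \eqref{borneinf} from Theorem~\ref{thdimboite} gives directly
$$
\alpha_u^- \;\ge\; \dim_B^+(\sigma) \;\ge\; \frac12\,\frac{\log 2}{C+\log(V+5)},
$$
which is the claimed inequality. The main obstacle I anticipate is not in any of the substitutions but in checking that the hypotheses of the \cite{tche1} criterion are met in the exact form stated there — in particular making precise the uniform polynomial bound on $\|M_k\|$ over the relevant energy set and confirming that the covering by bands at scale $\varepsilon_k$ is admissible as an input to that criterion (the factor $\tfrac12$, coming from the need to retain only every other band to keep the cover disjoint, must be tracked consistently). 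Once the interface with \cite{tche1} and \cite{ioc1} is correctly set up, the dimension bound transfers to the transport bound with no further work.
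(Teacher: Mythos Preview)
Your approach is essentially the same as the paper's: quote \cite{ioc1} for the polynomial transfer-matrix bound under the bounded density hypothesis, invoke \cite{tche1} (the paper also cites \cite{tche2}) to obtain $\alpha_u^- \ge \dim_B^+(\sigma)$, and then insert the estimate \eqref{borneinf} from Theorem~\ref{thdimboite}. Your added verification via Jensen that bounded density forces $C<+\infty$ is a useful detail the paper leaves implicit.
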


\begin{proof}
 It is shown in \cite{tche1} and
\cite{tche2} that if the norms of the transfer matrix are polynomially
bounded on the spectrum then
we have $\alpha_u^- \geq \dim_B^+(\sigma)$. This property on  the
norm of the transfer matrix is shown for irrational  with
bounded density in \cite{ioc1}.
\end{proof}

\noindent\textit{Acknowledgments.} It is a pleasure to thank Dominique Vieugué for useful
conversations about number theory.

\end{document}